\documentclass[a4paper,11pt]{article}
\pdfoutput=1 

\usepackage{jheppub} 

\usepackage{subfigure}
\usepackage{amsthm}
\usepackage{amsmath}

\usepackage[T1]{fontenc} 

\graphicspath{{./figures/}}
\usepackage{bbold}
\usepackage{tikz}
\usepackage{tkz-euclide}
\usepackage{adjustbox}
\usepackage{pst-solides3d}
\usetikzlibrary{matrix}
\usepackage{tikz-3dplot}
\usepackage[all,cmtip]{xy}

\theoremstyle{plain}
\newtheorem{thm}{Theorem}[section]
\newtheorem{lem}[thm]{Lemma}
\newtheorem{prop}[thm]{Proposition}
\theoremstyle{definition}
\newtheorem{defn}[thm]{Definition}
\newtheorem{exmp}[thm]{Example}
\theoremstyle{remark}

\usepackage{braket}

\def\hom{\text{hom}(C,G)}
\def\homA{\text{hom}(C_A,G)}
\def\homtA{\text{hom}(C_{\tilde{A}},G)}

\def\homdA{\text{hom}(C_{\partial A}, G)}
\def\homB{\text{hom}(C_B,G)}
\def\Hom{\text{Hom}}
\def\Z{\mathbb{Z}}
\def\H{\mathcal{H}}
\def\C{\mathbb{C}}

\def\ker{\text{ker}}
\def\im{\text{Im}}

\title{ \textbf{Topological Entanglement Entropy in \(d\)-dimensions for Abelian Higher Gauge Theories}}

\author[a,1]{J. P. Ibieta-Jimenez,\note{Corresponding author.}}
\author[a]{M. Petrucci,}
\author[a]{L. N. Queiroz Xavier,}
\author[a]{P. Teotonio-Sobrinho}

\affiliation[a]{Departamento de Fisica Matematica, Universidade de Sao Paulo\\ Rua do Matao Travessa R 187, CEP 05508-090, Sao Paulo, Brazil.}

\emailAdd{pibieta@if.usp.br}
\emailAdd{marzia@if.usp.br}
\emailAdd{lucasnix@if.usp.br}
\emailAdd{teotonio@if.usp.br}

\abstract{We compute the topological entanglement entropy for a large set of lattice models in $d$-dimensions. It is well known that many such quantum systems can be constructed out of lattice gauge models. For dimensionality higher than two, there are generalizations going beyond gauge theories, which are called higher gauge theories and rely on higher-order generalizations of groups. Our main concern is a large class of $d$-dimensional quantum systems derived from Abelian higher gauge theories.
In this paper, we derive a general formula for the bipartition entanglement entropy for this class of models, and from it we extract both the area law and the sub-leading terms, which explicitly depend on the topology of the entangling surface. 
We show that the entanglement entropy $S_A$ in a sub-region $A$ is proportional to $\log(GSD_{\tilde{A}})$, where \(GSD_{\tilde{A}}\) is the ground state degeneracy of a particular restriction of the full model to \(A\).
The quantity $GSD_{\tilde{A}}$ can be further divided into a contribution that scales with the size of the boundary $\partial A$ and a term which depends on the topology of $\partial A$. There is also a topological contribution coming from $A$ itself, that may be non-zero when $A$ has a non-trivial homology. 
We present some examples and discuss how the topology of $A$ affects the topological entropy. 
Our formalism allows us to do most of the calculation for arbitrary dimension $d$. The result is in agreement with entanglement calculations for known topological models. 
 }

\begin{document} 
\maketitle
\flushbottom
\section{\label{sec:intro}Introduction}

The concept of entanglement entropy in quantum many-body systems is increasingly gaining relevance for both the quantum information and the condensed matter theory communities. In the latter case, the interest comes from applying ideas of quantum information that could provide new tools for the study of quantum many-body systems and, in consequence, to deepen the understanding of their quantum phases. In particular, questions about the scaling of this entropy with the system size appear to be relevant as an indicator for quantum entanglement. Of particular interest is the scaling of entanglement entropy for ground states of gapped systems, since they often follow an \emph{area law} \cite{Srednicki, eisert10, Kitaev2006}. More precisely, if we consider a distinguished sub-region \(A\) of the total system, the scaling of entanglement entropy is linear with the boundary of the region, \(\partial A\). See \cite{eisert10} for a detailed account on the occurrence of area laws for the entanglement entropy of quantum systems.  

The growing interest on the study of entanglement entropy in quantum many-body systems arises from different points of view. For example, a source of interest in the scaling area law of entanglement entropy comes from asking whether a quantum many-body system can be simulated by a classical computer. The scaling of entanglement entropy specifies how well a given many-body quantum state can be approximated by a matrix-product state or a PEPS \cite{Verstraete06}. More importantly for the purposes of this work, the \emph{topological entanglement entropy} \cite{Kitaev2006, Levin06} arises as an interesting probe for \emph{topological order} \cite{Wen90, Wen04, Nussinov09} in quantum states. The entanglement entropy calculated in the ground states of topologically ordered states follows an \emph{area law} plus a universal correction indicating the presence of long-range entanglement.

Topological phases of matter are usually characterized by exhibiting long-range entanglement and non-local order parameters such as the ground state degeneracy ($GSD$) and topological spins. In addition, entanglement entropy turns out to be a good measure of the presence of topological order \cite{Aguado08, Furukawa07}. Details about the connection between topological entanglement entropy and topological order are for example exposed in \cite{Nussinov09,Castelnovo2007, Levin06}. For two dimensional topological phases, the scaling of the entanglement entropy presents a constant term \cite{Levin06, Kitaev2006} that corresponds to the \emph{topological entropy}. This result is examined in detail for the Toric Code in \cite{hamma05, Castelnovo2007, Hart2018, Orus2014}. In this paper, we show that this extends to Abelian higher gauge models in all dimensions. Indeed if the model is topological, it presents a non-zero topological entropy given in terms of the higher cohomology groups of both the bulk and the boundary $\partial A$ of the subregion $A$.

In general, two dimensional topological order and topological entropy are relatively well understood. The same cannot be said about higher dimensional topological phases. 
Our interest here is to shed some light on the main features of topological entanglement entropy in dimensions higher than two. To do so, we restrict ourselves to study Abelian lattice models that come from higher gauge theories, such as the ones studied by \cite{Kapustin13,Kapustin14,Bullivant16,Bullivant17,higher,Zhu18,Delcamp18,Delcamp18towards,Delcamp19}. Some of these models are topological. They present a ground state degeneracy that depends on the topology of the underlying manifold, and extended excitations generalizing anyons to higher dimensions. The models presented here can be interpreted as higher gauge generalizations of the Toric Code. This simplification allows us to work in arbitrary dimensions and to understand in detail how these models depend on the topology of the underlying manifold. In the present work we look at the topological entropy and how it depends on the topology of the subregion $A$.

The definition of entanglement entropy is straightforward: we consider a bipartition of the system into a sub-region \(A\) and its complement \((B)\). Let \(\rho\) be the density matrix of the ground space state, defined in the whole lattice. Then, \(\rho_A = \text{Tr}_B (\rho)\) is the reduced density matrix, obtained by tracing out the contribution from region \((B)\). The entanglement entropy is then defined as the \emph{von Neumann} entropy of the reduced density matrix, namely,
\begin{equation}
S_A:= - \text{Tr}\left(\rho_A \log \rho_A\right).
\end{equation}
In a gapped phase, the entanglement entropy is expected to satisfy an \emph{area law} as the leading term. The topological information is contained in subleading terms and, in general, it is not easy to extract. Several prescriptions \cite{Kitaev2006,Levin06} were constructed in order to extract the topological correction to the entanglement entropy in two dimensional gapped systems. These prescriptions have been generalized \cite{Castelnovo08,Grover11} for \(d=3\) and, consequently, used to successfully obtain the entanglement entropy of fracton models \cite{Ma2018,Schmitz2018}. 

In this paper, we study the entanglement entropy of $n$-dimensional Abelian higher gauge theories, all at once, without the need to adapt the procedure for each dimension. This can be achieved by using the language of homological algebra, in which higher gauge theories are naturally described as shown in \cite{higher}. The way the entanglement entropy is obtained, in essence, relies on the fact that the lattice models are constructed as \emph{stabilizer codes} \cite{Gottesman97}. Similar calculations for stabilizer codes are performed in \cite{Fattal2004,hamma05,Hamma2005,He2018,Zou16}. The result obtained for the entanglement entropy relates this quantity to the ground state degeneracy of an associated model, which we write as
\begin{align}
S_A = \log \left(GSD_{\tilde{A}}\right),
\end{align}
where $GSD_{\tilde{A}}$ is the ground state degeneracy of a particular restriction of the original model, to be defined in section \ref{sec:bipart}.
We show that this result allows us to write $S_A$ as a sum of two terms. The first one $S_{\partial A}$ is in agreement with the \emph{area laws} depending on the geometry of the boundary ${\partial A}$, while the second term $S_{\text{Topo}}$ is explicitly topological, depending on the topology of both the bulk of the region $A$ and its boundary $\partial A$. This result holds for \textit{any} higher gauge theory in the sense of \cite{higher} and for \textit{any dimension} $d$. We explicitly calculate examples where the topological term depends not only on the Betti numbers of the subregion $A$ and its boundary $\partial A$, but also on more exotic properties.
\par The paper is structured as follows. In section \ref{sec:ExampleAHGT} we begin by reviewing Abelian higher gauge theories in detail, this is done by giving explicitly an example. Next, in section \ref{sec:higher} we show how these models are easily described in the language of homological algebras, as described in detail in \cite{higher}. In section \ref{sec:entent} the calculation of the entanglement entropy is performed, and we show how to extract both the \text{area law} and the topological entropy from it. In section \ref{sec:Examples} we apply the results of section \ref{sec:entent} to examples in 2D and 3D. We end the paper with some final remarks in section \ref{sec:Remarks}.

\section{Models from Abelian Higher Gauge Theories. The warm-up example}\label{sec:ExampleAHGT}

Before we define our class of models in full generality, it is convenient to first present an examples in 3 dimensions. It will be described in the usual way as a 
many-body system defined on a lattice. The Hilbert space ${\mathcal H}$ consists of
quantum states attached to elements of the lattice such as vertices, links, and 
plaquettes. A Hamiltonian $H$ acting on ${\mathcal H}$
completes the picture. 

In section \ref{sec:higher} we will introduce a formalism that allows us to describe all models in this class in a unified way for all dimensions; this is made possible by employing a few constructions coming from homological algebra. As far as the present section, we don't need to be concerned with all the homological details, but we will point out some of the chain complexes that will be part of the construction presented in section \ref{sec:higher}. It is worth mentioning again that the example exhibited in this section is merely illustrative. We choose to present the model for a $3$-dimensional cubic lattice to be more intuitive. Through our formalism, to be presented in \S\ref{sec:higher}, these models can be defined and studied in \textit{any dimension} $d$ for more general cell decompositions, including simplicial complexes. 

\subsection{0,1,2-Gauge in 3D}\label{ex:12}

This model is a generalization of an ordinary gauge theory. In addition to the
states localized on the links and labeled by an Abelian group $G_1$, we also have states associated with plaquettes that are labelled by another Abelian group $G_2$, and states on vertices associated to the Abelia group $G_0$. Here we use the additive notation for the group operation. When $G_0, G_2$ are trivial we recover the Quantum Double model defined in $3$-dimensions. When $G_0=0$, this class of models
corresponds to an Abelian version of the models constructed in \cite{Bullivant16, Bullivant17},
where 2-groups were considered. We will leave the discussion about the relation between our formalism and the one based on $2$-groups (or more generally, $n$-groups) to section \ref{sec:higher}. Here we are working with a particular case of Abelian 2-groups, where the only data we need is expressed by the group homomorphism $\partial_2^G:G_2\to G_1$.

For simplicity, we consider a $3$-dimensional space $\Sigma$ discretized by a cubic lattice. Let $K_0, K_1, K_2$ and $K_3$ be the sets of vertices, links, faces and cubes, respectively. For each vertex $v\in K_0$ we
have a Hilbert space ${\H}_v$ with orthornomal basis $\left\{ \ket{h}, h\in G_0\right\}$. For each link $l\in K_1$ we
have a Hilbert space ${\H}_l$ with orthornomal basis $\left\{ \ket{g}, g\in G_1\right\}$. 
For each plaquette $p\in K_2$ we have a Hilbert space $\H_p$ with orthornomal
basis $\left\{ \ket{\alpha} , \alpha\in G_2\right\}$. 
We call this kind of model a $0,1,2$-gauge theory to indicate that there
are quantum states associated with sets $K_0$ (vertices), $K_1$ (links) and $K_2$ (plaquettes).
The total Hilbert space is the tensor product over all local Hilbert spaces, namely
\begin{equation}\label{HilSp-12}
 \H=\bigotimes_{v\in K_0}\H_v\bigotimes_{l\in K_1}\H_l\bigotimes_{p\in K_2}\H_p .	
\end{equation}
It is convenient to represent a basis element of $\mathcal{H}$ as functions \((f_0, f_1, f_2),\) where $f_i: K_i \to G_i$, for $i = 0, 1, 2$. We will use a notation where each element of the lattice will have a group element attached in order to represent the functions $f_i$.

The Hamiltonian has the form
\begin{equation}\label{Ham:12-gauge}
H=-\sum_{v \in K_0} A_v -\sum_{l \in K_1} A_l - \sum_{l \in K_1} B_l- \sum_{p \in K_2} B_p - \sum_{c \in K_3}B_c,
\end{equation}

and is composed by a sum of commuting projectors labelled by vertices, links, plaquettes and cubes.
The vertex operator $A_v$ is a sum of gauge transformations. In other words, 

\begin{align}\label{eq:12Av}
A_v=\dfrac{1}{|G_1|}\sum_{g \in G_1} A_v^{g}\,,\,\, \text{where:}\qquad
A_v^g \;\vcenter{\hbox{\includegraphics[scale=0.75]{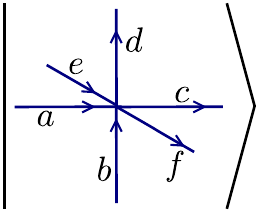}}}\ = \ \vcenter{\hbox{\includegraphics[scale=0.75]{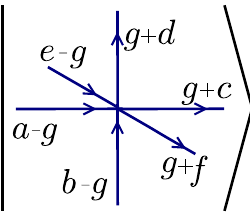}}}.
\end{align}

Higher gauge transformations appear in the definition of $A_l$: \[A_l = \frac{1}{|G_2|}\sum_{\beta \in G_1}A_l^{\beta}.\] The elementary transformation $A_l^{\beta}$ acts on the adjacent plaquettes with $\beta \in G_2$ and on the link with $(\partial_2^G \beta$), as shown in figure \ref{fig:12Al3G}.

\begin{figure}[h!]
	\centering
	\includegraphics[scale=0.8]{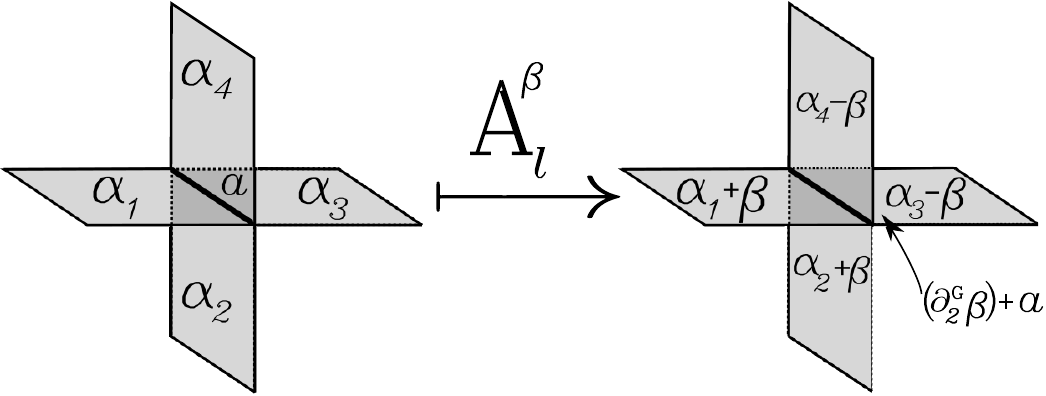}
	\caption{The action of $A_l^\beta$ is shown on an initial arbitrary basis state, involving the link \(l \in K_1\) and its four adjacent plaquettes.}
	\label{fig:12Al3G}
\end{figure}
 The 0-holonomy operator, $B_l$, compares the gauge fields of 
adjacent vertices with the map $\partial_1^G$ applied to the link degree of 
freedom, namely
\begin{equation}
\label{eq:Bl1G}
B_l \;  \vcenter{\hbox{\includegraphics[scale=0.75]{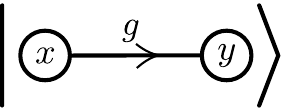}}} = 
\delta (x -y, \partial_1^Gg)\ 
\vcenter{\hbox{\includegraphics[scale=0.75]{Bl01ket.eps}}}.
\end{equation}
The 1-holonomy is not exactly the same as in the Quantum Double models: the operator \(B_p\) is defined by
 \begin{align}
 \label{eq:Bp2G}
 B_p \quad \vcenter{\hbox{\includegraphics[scale=0.75]{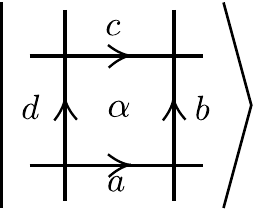}}}\ &= 
 \delta\left(a+b-c-d,\partial_2^G\alpha\right)\ \vcenter{\hbox{\includegraphics[scale=0.75]{Bp2Gket1.eps}}}.
\end{align}
Note that this operator gives eigenvalue \(1\) when the holonomy of the plaquette is equal to \(\partial_2^G\alpha\). To distinguish it from the ordinary holonomy, this operator is usually called fake holonomy. Here we will use the name $1$-holonomy instead. 

\par To simplify our notation, we will denote $\partial_2^G\alpha$ simply by $\partial \alpha $ whenever there is no danger of ambiguity. The $2$-holonomy operator $B_c$ measures the sum of the degrees of freedom living on the plaquettes that compose the boundary of the cube $c$, with eigenvalue $1$ when the result of the sum equals zero and eigenvalue zero otherwise, as defined in equation (\ref{eq:12Bc}):

\begin{align}
\label{eq:12Bc}
B_c \quad \vcenter{\hbox{\includegraphics[scale=0.55]{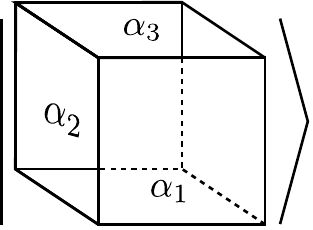}}}\ &= 
\delta(\sum_{j = 1}^6(-1)^{o_j}\alpha_j,0)\ \vcenter{\hbox{\includegraphics[scale=0.55]{12bc3G.eps}}},
\end{align} 
where \(o^j=\{0,1\}\) takes into account the relative orientation of the faces with respect to the cube, similarly to the $2$-holonomy operator. A discussion about the topological properties of this model, such as ground state degeneracy ($GSD$), will be postponed until we present the general formalism.

The data needed to cast this model into the formalism of 
Section \ref{sec:higher} is as follows. One needs to specify a pair of chain complexes. The first one represents the discretization of the manifold and is given by
\begin{equation} 
 0 \hookrightarrow C_3 \xrightarrow{\partial_3} C_2 \xrightarrow{\partial_2} C_{1} \xrightarrow{\partial_1} 
                C_0 \rightarrow 0,
\label{eq.chain}
\end{equation}
where $C_i$ is generated by $K_i$ and $\partial_i$ is the boundary map, for $i = 1,2,3$.
The algebraic data is also a chain complex of the form
\begin{equation} 
0 \hookrightarrow G_2 \xrightarrow{\partial_2^G} G_{1} \xrightarrow{\partial_1^G} 
                G_0 \rightarrow 0.
\end{equation}

\begin{exmp}{$G_0=G_1=\Z_2,\,G_2=\Z_4$}\label{ex:123-Z2Z4}

The groups that label the degrees of freedom are chosen to be $G_0=G_1= \Z_2 = \{0,1\}$ and $G_2=\Z_4=\{0,1,2, 3\}$. Moreover, the homomorphism that relates $1,2$-gauge fields is chosen to be such that $\partial_2^G(1 )=1$. As in the previous example, we use a graphical description for the basis states as follows:

\begin{itemize}
	\item a dotted line through a plaquette when it holds a $\ket{2}_p$ state, and an oriented dashed line for the $\ket{1,3}_p$ states, 
	\item for the links we picture a gray surface orthogonal to it for $\ket{1}_l$ state, no surface for the $\ket{0}_l$ state,
	\item a gray volume for vertices in  $\ket{1}_v$ state.
\end{itemize}
The flat configurations of the theory are those that are invariant under the actions of \(B_l\), \(B_p\) and \(B_c\) for all link \(l\), plaquettes \(p\) and cubes \(c\) of the lattice. Such configurations consist of dotted or dashed loops, conditions that are enforced by \(B_c\). The 1-holonomy operator \(B_p\) implies that every dashed line encloses a gray surface, which is the boundary of a gray volume. If $\Sigma$ has the topology of a $3$-dimensional ball, there is only one ground state given by
\begin{align}
\ket{GS}=&\prod_{v}A_v\prod_{l}A_l\bigotimes_{v}\ket{1}_v\bigotimes_{l}\ket{1}_l\bigotimes_{p}\ket{1}_p. \end{align}
For manifolds with exotic topologies, the number of ground states depends on the number of non-equivalent non-contractible closed surfaces one can drawn over $\Sigma$. The topological nature of this model will become clearer when we introduce the general formalism in \ref{sec:higher}. 
\end{exmp}

Now that we are familiar with some examples of Abelian higher gauge theories, we are ready to describe them all at once using a more general mathematical structure. The more general framework that is going to be exhibited in the next section also allows to compute the entanglement entropy, in the more general case, as we will show in section \ref{sec:entent}.

\section{\label{sec:higher}Review of Abelian Higher Gauge Theories}

The model presented on section \ref{sec:ExampleAHGT} is an example of what we  
call \emph{Abelian higher gauge theories}
introduced in \cite{higher}. 
It is useful to describe them using a formalism borrowed from Homological 
Algebra since it  allows us to handle a large class of models of 
arbitrary dimensions. 
In this section we recall from \cite{higher} only the basic notation and 
results needed to calculate the entanglement entropy for any 
such model. We refer to \cite{higher} for further details.
We have indicated in the last section that the models are parameterized by two chain complexes. 
The first one is geometrical in nature and accounts for the structure of the 
lattice. As for the second, it is a chain complex of finite Abelian groups encoding the 
higher gauge group of the model. There will be one model (Hilbert space 
and Hamiltonian) for any such choice of chain complexes. The choices corresponding to 
the example was given in section \ref{sec:ExampleAHGT}.

A simplicial decomposition is a natural choice for lattices of any dimension. Although 
the formalism can accommodate for any finite cell decomposition we will assume 
that the lattice $K$ is made of simplices. In other words
\[
K=K_0\cup K_1\cup \cdots\cup K_d,
\]
where $K_n$ is the (finite) set of $n$-dimensional simplices. 
We would like to point 
out that there are no further assumptions on $K$, which makes the formalism very flexible.
For instance, $K$ may have a boundary and may not have a uniform dimension

It is a standard procedure \cite{Hatcher} to associate to $K$ a
chain complex
\begin{equation} \label{eq:Cchain}
 0 \to C_d \xrightarrow{\partial^C_d} C_{d-1} \xrightarrow{\partial^C_{d-1}} \cdots  \xrightarrow{\partial^C_2} C_1 \xrightarrow{\partial^C_1} C_0 \to 0,
\end{equation}
which we will denote by $(C(K),\partial^C)$, where $0$ represents the trivial group.
We recall that $C_n$ is the Abelian group freely generated by $K_n$.
In other words,
if we write the group operation as an addition operation, $c\in C_n$ is given by a formal linear combination
\begin{equation}
c=\sum_{x\in K_n} n(x)x
\label{cn-chain}
\end{equation}
with $n(x)\in {\mathbb{Z}}$. The homomorphisms $\partial^C_n:C_n 
\rightarrow C_{n-1}$ are the usual boundary maps. 

To describe the higher gauge groups that label the degrees of freedom in the simplicial complex, we introduce a chain complex \((G, \partial^{G})\) of
finite Abelian groups given by
\begin{equation}\label{eq:Gchain}
0 \rightarrow G_d \xrightarrow{\partial^{G}_d}G_{d-1} \xrightarrow{\partial^{G}_{d-1}}\cdots \xrightarrow{\partial^{G}_2} G_1 \xrightarrow{\partial^{G}_1} G_0 \rightarrow 0,
\end{equation}
where  $0$ denotes the trivial group and $\partial^{G}_n:G_n \rightarrow G_{n-1}$ 
are group homomorphisms such that $\partial^{G}_{p}\circ\partial^{G}_{p+1}=0$, for any $0\le p\le d$. Note that the $d = 2$ case, i.e., the chain
\begin{equation}\label{eq:cross}
	0 \to G_2 \xrightarrow{\partial^G_2} G_1 \to 0 
\end{equation}
can be recast into the language of \textit{strict 2-groups} or equivalently, \textit{crossed modules}, which were used in \cite{Bullivant16} to construct models of topological phases in $3+1$ dimensions based on higher gauge theories. A \textit{crossed module} is a quadruple $\text{G} = (G_1, G_2, \partial, \triangleright)$, where $G_1$ and $G_2$ are groups, $\partial:G_2\to G_1$ is a group homomorphism and $\triangleright: G_1 \times G_2 \to G_2$ is an action of $G_1$ on $G_2$, which satisfies the following conditions:
\begin{align}
	\partial(g \triangleright \alpha) = g(\partial\alpha)g^{-1}, \\
	(\partial\beta)\triangleright\alpha = \beta\alpha\beta^{-1}.
\end{align}  
Since we consider only Abelian groups, the chain (\ref{eq:cross}) defines a crossed module with trivial action. In fact, the general case where the chain complex $(G,\partial^G)$ is composed of $d+1$ groups can be reformulated in the language of (Abelian) strict $(d+1)$-groups.  

Now, we define a gauge configuration $f$ to be an assignment of a group element $g\in G_n$
for each element $x\in K_n$. In other words, a gauge configuration is a sequence \(f=\{f_n\}_{ n=0}^d\) of functions such that
\begin{align}
\label{eq:fKn-G}
f_n: K_n &\rightarrow G_n, \\
     x & \mapsto f_n(x).
\end{align}
Strictly speaking, we should call $f$ a higher-gauge 
configuration. Only in the case when all groups except $G_1$ are trivial
$f$ is a proper gauge configuration, as can be seen from the examples of last section. 
For simplicity, we will keep using "gauge configuration" to mean a generic $f$. 

Because $C_n$ is freely generated by $K_n$, each map $f_n$ in (\ref{eq:fKn-G}) defines a unique group homomorphism $f_n:C_n\rightarrow G_n$, which is given by the extension of $f_n$ by linearity, i.e., if $c\in C_n$ as in (\ref{cn-chain}), then 
\[
f_n(c)=\sum_{x\in K_n} n(x)f_n(x).
\]
We use the same name $f_n$ to denote a gauge configuration as in \ref{eq:fKn-G} and a homomorphism $f_n:C_n\to G_n$ since there is a one to one correspondence between them.
The set $\Hom (C_n,G_n)$ of homomorphisms is also an Abelian group if we set
\[
(f_n+\tilde{f_n})(x)=f_n(x)+\tilde{f_n}(x), ~f_n,\tilde{f_n}\in \Hom (C_n,G_n).
\]
It is useful to collect all such Abelian groups in a single direct sum.
This simple fact allows us to view a gauge configuration $f$ as an element of the direct sum
\begin{equation}
\hom^0 := \bigoplus_{n=0}^d \Hom (C_n,G_n)
\label{hom0}
\end{equation}
of Abelian groups. Thus, a gauge configuration can be represented by a collection of maps between chain complexes, as
depicted  by the diagram in Fig. \ref{fig:classconf}. We would like to point out that figure \ref{fig:classconf} is not a commuting diagram.
When this happens,  $f\in \hom^0$ is called a chain map and, as we will see, the corresponding gauge configuration is gauge equivalent to the trivial one.
\begin{figure}[h!]
\begin{center}
\begin{adjustbox}{max size={.5\textwidth}{.5\textheight}}
\begin{tikzpicture}
  \matrix (m) [matrix of math nodes,row sep=2em,column sep=3em,minimum width=2em]
  {   \cdots & C_{n+1} & C_{n} & C_{n-1} & \cdots  \\
      \cdots  & G_{n+1} & G_{n} & G_{n-1} & \cdots \\};
  \path[-stealth]
    (m-1-1) edge node [above] {$\partial^C_{n+2} $} (m-1-2)
    (m-1-2) edge node [above] {$\partial^C_{n+1} $} (m-1-3)
            edge node [left] {$f_{n+1}$} (m-2-2)
    (m-1-3) edge node [above] {$\partial^C_{n}$} (m-1-4)
    		edge node [left] {$f_{n}$} (m-2-3)
    (m-1-4) edge node [above] {$\partial^C_{n-1}$} (m-1-5)       
    (m-1-4) edge node [right] {$f_{n-1}$} (m-2-4)
    (m-2-1) edge node [below] {$\partial^G_{n+2} $} (m-2-2)
    (m-2-2) edge node [below] {$\partial^G_{n+1} $} (m-2-3)
    (m-2-3) edge node [below] {$\partial^G_{n} $} (m-2-4)
    (m-2-4) edge node [below] {$\partial^G_{n-1} $} (m-2-5);
\end{tikzpicture}
\end{adjustbox}
\end{center}
\caption{\label{fig:classconf} A configuration $f \in \text{hom} (C,G)^0$, consisting on a collection of homomorphisms \(\{f_n\}\).}
\end{figure}

We are now in position to define the model by providing a Hilbert space $\mathcal{H}$ of states $\ket{\psi}$
and a Hamiltonian operator $H$ acting on $\mathcal{H}$. The set $ \hom^0$ is finite. Let $\H$ be the complex vector space generated by the orthonormal basis $\{\ket{f}\}_{f\in \hom^0}$. In other words, a state $\psi \in \H$ is given by a linear combination
\begin{align}\label{eq:gen-state}
\ket{\psi}=\sum_{f\in \hom^0} \psi(f)\ket{f},
\end{align}
where $\psi(f)\in \C$, and the internal product is $\braket{f|g} = \delta(f,g)$. We can explicit the orthonormal basis $\{\ket{f}\}$  as a tensor product over the local degrees of freedom living on the simplexes labeled by the groups.
\begin{align}\label{eq:ortho-basis}
\ket{f}=\bigotimes_{v\in K_0} \ket{f(v)}\bigotimes_{l\in K_1} \ket{f(l)}\bigotimes_{p\in K_2} \ket{f(p)}\bigotimes \cdots  \bigotimes_{s\in K_d} \ket{f(s)},
\end{align}
where the tensor product is made over all the 0-simplexes (vertices $v$), 1-simplexes (links $l$), 2-simplexes (plaquette $p$),... through to the $d$-simplexes associated to a group of the chain $(G, \partial^G)$.
In order to define the Hamiltonian, we will need to introduce more groups other than
$\hom^0$ given by (\ref{hom0}). Let us consider the groups $\hom ^{p}$ defined by
\begin{equation}
\label{eq:homp}
\hom^p := \bigoplus_{n=0}^{d} \Hom(C_n, G_{n-p}).
\end{equation}
An element $g\in \hom^p$ is a sequence $\{g_n\}_{n=0}^d$ of homomorphisms  
\mbox{$g_n:C_n\rightarrow G_{n-p}$}. The example of \(\hom^{1}\) is shown in Fig. \ref{fig:ChainMaps}.  
\begin{figure}[h!]
\begin{center}
\begin{adjustbox}{max size={.7\textwidth}{.7\textheight}}
\begin{tikzpicture}
  \matrix (m) [matrix of math nodes,row sep=2em,column sep=3em,minimum width=2em]
  {   \cdots & C_{n+1} & C_{n} & C_{n-1} & \cdots  \\
      \cdots  & G_{n+1} & G_{n} & G_{n-1} & \cdots \\};
  \path[-stealth]
    (m-1-1) edge node [above] {$\partial^C_{n+2} $} (m-1-2)
    (m-1-2) edge node [above] {$\partial^C_{n+1} $} (m-1-3)
    (m-1-3) edge node [above] {$\partial^C_{n}$} (m-1-4)
    (m-1-4) edge node [above] {$\partial^C_{n-1}$} (m-1-5)  
    (m-2-1) edge node [below] {$\partial^G_{n+2} $} (m-2-2)
    (m-2-2) edge node [below] {$\partial^G_{n+1} $} (m-2-3)
    (m-2-3) edge node [below] {$\partial^G_{n} $} (m-2-4)
    (m-2-4) edge node [below] {$\partial^G_{n-1} $} (m-2-5);
    \path[-stealth]
     (m-1-1) edge node [left] {$ g_{n+2} $} (m-2-2)
    (m-1-2) edge node [left] {$g_{n-1}$} (m-2-3)
     (m-1-3) edge node [left] {$ g_n $} (m-2-4)
     (m-1-4) edge node [left] {$g_{n-1}$} (m-2-5);
\end{tikzpicture}
\end{adjustbox}
\end{center}
\caption{\label{fig:ChainMaps} An element 
$g \in \hom (C,G)^1$ as a sequence  $\{g_n\}_{n=0}^d$ of skewed maps}
\end{figure}

An important observation is that the sequence of groups $\hom^p$ can be made into a  co-chain
complex. This is achieved  by considering maps
\(\delta^p : \hom^p \rightarrow \hom^{p+1}\), defined by:
\begin{equation}\label{eq:cob_op}
(\delta^p h)_n=h_{n-1}\circ \partial^C_n - (-1)^p \partial^G_{n-p}\circ h_n ,
\end{equation}
with \(h \in \hom^p\). In fact, it is straightforward to verify that \(\delta^{p+1} \circ \delta^p = 0\), which turns the sequence  
\begin{align}\label{eq:homcoChain}
\cdots \xrightarrow{\quad\delta^{-2}\;} \text{hom}(C,G)^{-1} \xrightarrow{\quad\delta^{-1}\;} \text{hom}(C,G)^{0} \xrightarrow{\quad\delta^0\;}\text{hom}(C,G)^{1} \xrightarrow{\quad\delta^{1}\;} \cdots 
\end{align}
into a co-chain complex.
The expression above shows only the part of the sequence that is relevant for the present 
application, please refer to \cite{higher} for a more detailed account. Associated to this co-chain complex, there are the so-called \textit{Brown cohomology groups} \cite{Brown}
\begin{align}\label{eq:BrownCohomology}
	\text{H}^p(C,G) = \text{ker}(\delta^{p})/\text{Im}(\delta^{p-1}), 
\end{align}
which, as shown in \cite{Brown}, are isomorphic to the direct product of the cohomology groups of the chain complex $(C(K),\partial^C)$ with coefficients in the homology groups of $(G,\partial^G)$, i.e.,
\begin{align}\label{eq:BrownIsomorphism}
	\text{H}^p(C,G) \cong \bigoplus_{n=0}^d \text{H}^n(C, \text{H}_{n-p}(G)).
\end{align} 
From (\ref{eq:BrownIsomorphism}), it is clear that the cohomology groups defined in equation (\ref{eq:BrownCohomology}) are topological invariants of the manifold described by the chain complex $(C(K),\partial^C)$. The Brown cohomology plays a major role in our formalism, as we will see in the following sections. We refer the reader to \cite{higher, Brown} and references therein for a more detailed account on Brown cohomology and on the isomorphism (\ref{eq:BrownIsomorphism}). 

To complete the description, we need to define a chain complex that is the dual of 
(\ref{eq:homcoChain}). This will be done by dualizing the groups $G_n$ as follows.
Let $\Hom(G_n,U(1))$ be the set of homomorphisms
$a:G_n\rightarrow U(1)$, for each $0\le n \le d$. Since each $G_n$ is Abelian, this is nothing but the set of 
irreducible unitary representations of $G_n$, denoted by $\widehat{G}_n$. Let us give $\widehat{G}_n$ a structure of an Abelian group. 
Let $a,b \in \widehat{G}_n$ and 
$g\in G_n$. Let us write the group operation in $\widehat{G}_n$ 
as $a+b$ and the inverse of $a$ as $-a$.
The group is defined by setting $(a+b)(g)=a(g)b(g)$ and $(-a)(g)=(a(g))^{-1}$. 
In order to dualize (\ref{eq:homcoChain}) we first define the dual $\hom_p$ of (\ref{eq:homp}) as
\begin{equation}
\hom_p := \bigoplus_{n=0}^d \Hom (C_n,\widehat{G}_{n-p}).
\end{equation}

As before, an element $m\in\hom_p$ is a sequence $\{m_n\}_{n=0}^d$ with
$m_n\in \Hom(C_n,\widehat{G}_{n-p})$. Each $m_n$ is completely defined by its values 
on the generators $x\in K_n$. This allows us to introduce a pairing
\begin{align}
\langle\cdot,\cdot\rangle:\hom_p\times \hom^p & \rightarrow U(1) \nonumber \\ 
     (m,f) & \mapsto \langle m,f\rangle
\end{align}
given by
\begin{align}
\langle m,f\rangle=\prod_{n=0}^d \, \prod_{x \in K_n} m_n(x)(f_n(x)).
\end{align}
Let us define  a boundary map $\delta_p:\hom_p \rightarrow \hom_{p-1}$ 
given by
\begin{equation}
\label{eq:deltaHomChain}
\langle \delta_pm,f\rangle=\langle m,\delta^{p-1}f\rangle ,
\end{equation}
where $m\in \hom_p$ and $f\in \hom^{p-1}$. Clearly, $\delta_p \circ \delta_{p+1}=0$ and thus the
chain complex dual to (\ref{eq:homcoChain}) that we will need is given by 
\begin{align}\label{eq:homChain}
\cdots \xleftarrow{\quad\delta_{-1}\;}\text{hom}(C,G)_{-1} \xleftarrow{\quad\delta_{0}\;} \text{hom}(C,G)_{0} \xleftarrow{\quad\delta_1\;}\text{hom}(C,G)_{1} \xleftarrow{\quad\delta_{2}\;} \cdots.
\end{align}

\subsection{Operators and Hamiltonian}

The Hamiltonian we presented as an example in section \ref{sec:ExampleAHGT}
have a similar structure. It is a sum of operators that can be divided in
two types. There are higher gauge transformations and diagonal operators 
measuring higher holonomies. 
In the general formalism, these two sets of operators in $\mathcal{H}$ come 
from the co-chain complex (\ref{eq:homcoChain}) and the chain complex 
(\ref{eq:homChain}), respectively.
The first set of operators is parametrized by $\hom^{-1}$ whereas the second one 
by $\hom_{1}$.

For $t\in \hom^{-1}$ and $m\in \hom_{1}$ 
we define:
\begin{align}\label{eq:At}
A_t \ket{f} &: = \ket{f + \delta^{-1} t},\\
B_m \ket{f} &: = \langle m,\delta^0 f \rangle \ket{f}.
\label{eq:Bm}
\end{align}

The interpretation of (\ref{eq:At}) and (\ref{eq:Bm}) can be 
derived from the special case when the chain complex 
(\ref{eq:Gchain}) is made of trivial groups except for $G_1$.
The resulting model has the familiar form of a gauge theory on the lattice.
In this case, it follows that a configuration $f\in \hom^0$ assigns one group element of $G_1$ for each link of the lattice, as expected in a ordinary
gauge theory. Furthermore, $t\in \hom^{-1}$  gives a group element $g(v)\in G_1$ for each
vertex $v$ of the lattice. One can verify that $A_t$ performs on each vertex 
$v$ an ordinary gauge transformation with parameter $g(v)$. 
As for the general case, $f$ and $A_t$ define what we mean by higher 
gauge configurations and higher gauge transformations.
As we pointed out before, we will keep calling them gauge configurations and
gauge transformations. Going back to the special case, we need to look at the 
eigenvalue $\langle m,\delta^0 f \rangle$ of $B_m$ to see what is it 
measuring. It follows from the definition that 
$\delta^0f\in \hom^1$ and $m\in \hom_1$, it assigns to each face $p$ of the lattice 
its holonomy $h_p$ and a unitary representation $\chi_p\in U(1)$ 
respectively.
Operators $B_m$ are therefore measuring the number $\prod_p \chi_p(h_p)$ 
depending on all holonomies of the lattice. 

The first thing to be noticed is that both $A_t$ and $B_m$ are not localized as they
act on the entire lattice. For the definition of the Hamiltonian, however, we need 
to define local projectors. This is easily achieved by taking $t$ and $m$ with a 
local support in the lattice $K$ and averaging over the groups.  

\begin{defn}[Localized maps]\label{def:locmaps}
Let $x\in K_n$, $g \in G_{n+1}$ and $r \in \hat{G}_{n-1}$. We define the local maps
$\hat{e}[n,x,r]\in \text{hom}(C,G)_{1}$ and $e[n,x,g] \in \text{hom}(C,G)^{-1}$ by
\begin{align}
\label{eq:e[x,g]}
e[n,x,g] (y)  :=& \begin{cases} g, \text{ if } y=x \\ 0, \text{ otherwise } \end{cases}    \\ \quad \hat{e}[n,x,r](f) :=& r(f_n(x)),
\label{eq:e[x,r]} 
\end{align}
where $y \in K$, and $f\in \text{hom}(C,G)^p$.
\end{defn}

\begin{defn}[Local projector operators]\label{def:AxBx}
Let $x\in K_n,g \in G_{n+1}$, $r \in \hat{G}_{n-1}$. We define \textit{local gauge projector} $A_{n,x}$ and \textit{local holonomy projector} $B_{n,x}$ as:
\begin{align}
A_{n,x} &=\frac{1}{|G_{n+1}|}\sum_{g\in G_{n+1}}  A_{e[n,x,g]} ,\; \quad  \\  
B_{n,x} &=\frac{1}{|G_{n-1}|}\sum_{r\in \hat{G}_{n-1}}  B_{\hat{e}[n,x,r]} .\;
\end{align}
\end{defn}

The Hamiltonian operator $H$ is defined as
\begin{align}\label{eq:Hamiltonian}
H =-\sum_{n=0}^{d} \sum_{x\in K_n} A_{n,x} -\sum_{n=0}^d \sum_{x\in K_n} B_{n,x}.
\end{align}

It is straightforward to show that $A_{n,x}$ and $B_{n,x}$ are commuting projectors. Furthermore, in the special case where the chain complex 
$(G,\partial^G)$ has only $G_1$ different from the trivial group, we recover
the quantum double model with group $G_1$. Also, by choosing  $(G,\partial^G)$
we can reproduce the example we have discussed in the last section.

This Hamiltonian is actually frustration free since there is at least one
state that gives eigenvalue $1$ for all $A_{n,x}$ and $B_{n,x}$. 
Let $\ket{0}$ denotes the state
labeled by the trivial element of the group $\hom^0$. It corresponds to a configuration that maps all elements of $K_n$ to $0\in G_n$ for all $n$. 
Let us define
\begin{align*}
\ket{0}_G = \prod_{n=0}^d \prod_{x \in K_n}\, A_{n,x} \ket{0}.
\end{align*} 
One can show that $\ket{0}_G$ is non zero and
\begin{align}
A_{n,x}\ket{0}_G & =\ket{0}_G,\\
B_{n,x}\ket{0}_G & =\ket{0}_G,
\end{align}
for all $0\leq n \leq d$ and $x \in K_n$. Therefore, a state $\ket{\psi}$ is in the 
ground state ${\cal H}_0$ if and only if 
\begin{align}\label{eq:GSeq} 
A_{n,x}\ket{\psi} & =\ket{\psi},\\
B_{n,x}\ket{\psi} & =\ket{\psi},
\end{align}
for all $0\leq n \leq d$ and $x \in K_n$.

It is useful to characterize ${\cal H}_0$ in another way. Let us consider
the following operators:
\begin{enumerate}
\item projector $\mathcal{A}_0$ given by
\begin{align}\label{eq:projA0}
\mathcal{A}_0=\dfrac{1}{|\text{hom}(C,G)^{-1}|}\;\sum_{t \in \text{hom}(C,G)^{-1}} A_t,
\end{align}
that maps any state $\ket{f} \in \mathcal{H}$ into a normalized sum of gauge equivalent states;
 \item projector $\mathcal{B}_0$ given by
\begin{align}\label{eq:projB0}
\mathcal{B}_0=
\dfrac{1}{|\text{hom}(C,G)_{1}|}\;\sum_{m\in \text{hom}(C,G)_{1}} B_m,
\end{align}
 which gives eigenvalue \(1\) for a state 
 $\ket{f} \in \mathcal{H}$, only if satisfies $f \in \text{ker} (\delta^0)$; in other words, it projects onto the \emph{flat holonomy} sector of \(\H\).
\end{enumerate}

As stated in \cite{higher}, the projector $\Pi_0$ on the ground state subspace 
${\cal H}_0$ can be written as
\begin{align}
\label{def:P0}
\Pi_0 := \mathcal{A}_{0} \mathcal{B}_{0} \;.
\end{align}
Furthermore, the dimension of ${\cal H}_0$ is determined by the zeroth Brown cohomology group $\text{H}^0(C,G)$ of the cochain complex in Eq.(\ref{eq:homcoChain}), which explicitly shows how the ground state degeneracy depends on the topology of the underlying manifold. This result can be stated more precisely as follows:

\begin{thm}[Dimension of the ground state subspace]\label{thm:main}
The dimension $\text{GSD} = \text{dim}(\mathcal{H}_0)$ of the ground state subspace $\mathcal{H}_0$ is given by the number of \emph{flat} states \(|\text{ker}(\delta^0)|\), modulo the \emph{gauge} equivalence \(|\text{Im}(\delta^{-1})|\), that is
\begin{align}\label{eq:GSD}
\text{GSD} = \frac{ |\text{ker} (\delta^0)|}{|\text{Im}(\delta^{-1})|} = |\text{H}^0(C,G)| = \prod_{n=0}^d|\text{H}^n(C, \text{H}_{n}(G))|.  
\end{align}
\end{thm}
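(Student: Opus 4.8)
The plan is to compute $\dim(\mathcal{H}_0)$ directly from the ground state conditions (\ref{eq:GSeq}) by describing explicitly which amplitude vectors $\psi:\text{hom}(C,G)^0\to\C$ survive, rather than attempting a spectral analysis of the full Hamiltonian. The essential structural input is that $(\text{hom}(C,G)^p,\delta^p)$ is a genuine co-chain complex, so $\delta^0\circ\delta^{-1}=0$ and hence $\im(\delta^{-1})\subseteq\ker(\delta^0)$; this inclusion is what makes the quotient $\text{H}^0(C,G)=\ker(\delta^0)/\im(\delta^{-1})$ meaningful and is the conceptual crux of the count. I would use (\ref{def:P0}), i.e. $\Pi_0=\mathcal{A}_0\mathcal{B}_0$ with $\mathcal{A}_0,\mathcal{B}_0$ commuting orthogonal projectors, so that $\ket{\psi}\in\mathcal{H}_0$ iff $\mathcal{A}_0\ket{\psi}=\mathcal{B}_0\ket{\psi}=\ket{\psi}$, and then translate each of these two fixed-point conditions into a condition on the coefficients of $\ket{\psi}=\sum_{f}\psi(f)\ket{f}$.

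First I would analyze $\mathcal{B}_0$. From (\ref{eq:Bm}) one gets $\mathcal{B}_0\ket{f}=\bigl(|\text{hom}(C,G)_{1}|^{-1}\sum_{m\in\text{hom}(C,G)_{1}}\langle m,\delta^0 f\rangle\bigr)\ket{f}$. For fixed $f$ the map $m\mapsto\langle m,\delta^0 f\rangle$ is a character of the finite abelian group $\text{hom}(C,G)_{1}$, so by orthogonality of characters the inner sum equals $|\text{hom}(C,G)_{1}|$ when $\delta^0 f=0$ and vanishes otherwise. The point that makes this dichotomy sharp is that the pairing in (\ref{eq:deltaHomChain}) is non-degenerate: it is a product over all $n$ and all generators $x\in K_n$ of the perfect Pontryagin pairings $\widehat{G}_{n-1}\times G_{n-1}\to U(1)$, so $\langle m,\delta^0 f\rangle=1$ for every $m$ precisely when $\delta^0 f=0$. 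Hence $\mathcal{B}_0$ is the orthogonal projector onto $\mathrm{span}\{\ket{f}:f\in\ker(\delta^0)\}$, and $\mathcal{B}_0\ket{\psi}=\ket{\psi}$ forces $\psi$ to be supported on $\ker(\delta^0)$.

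Next I would analyze $\mathcal{A}_0$. By (\ref{eq:At}), $\mathcal{A}_0\ket{f}=|\text{hom}(C,G)^{-1}|^{-1}\sum_{t\in\text{hom}(C,G)^{-1}}\ket{f+\delta^{-1}t}$; since $\{A_t\}$ is a unitary representation of the group $\text{hom}(C,G)^{-1}$, the operator $\mathcal{A}_0$ is the orthogonal projector onto its invariants, and as $t$ ranges over $\text{hom}(C,G)^{-1}$ the element $\delta^{-1}t$ sweeps the subgroup $\im(\delta^{-1})$ uniformly. Thus $\mathcal{A}_0\ket{\psi}=\ket{\psi}$ is equivalent to $\psi$ being constant on cosets of $\im(\delta^{-1})$. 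Combining the two conditions, $\psi$ is supported on $\ker(\delta^0)$ and constant on $\im(\delta^{-1})$-cosets; because $\im(\delta^{-1})\subseteq\ker(\delta^0)$ each such coset lies entirely inside or entirely outside $\ker(\delta^0)$, so these cosets partition $\ker(\delta^0)$ into exactly the classes of $\text{H}^0(C,G)$. A basis of $\mathcal{H}_0$ is then given by the orbit-averaged states $\ket{[f]}$, one per class $[f]\in\ker(\delta^0)/\im(\delta^{-1})$, whence
\begin{equation*}
\dim(\mathcal{H}_0)=\bigl|\ker(\delta^0)/\im(\delta^{-1})\bigr|=\frac{|\ker(\delta^0)|}{|\im(\delta^{-1})|}=|\text{H}^0(C,G)|,
\end{equation*}
the middle equality holding because all groups involved are finite. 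The final product formula is then immediate from the Brown isomorphism (\ref{eq:BrownIsomorphism}) at $p=0$, $\text{H}^0(C,G)\cong\bigoplus_{n=0}^d\text{H}^n(C,\text{H}_n(G))$, by taking orders.

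The step I expect to require the most care is the character-sum evaluation for $\mathcal{B}_0$: one must verify that the pairing of (\ref{eq:deltaHomChain}) is non-degenerate, so that the annihilator of $\delta^0 f$ in $\text{hom}(C,G)_{1}$ is the whole group exactly when $\delta^0 f=0$; everything else is bookkeeping with finite abelian groups. A clean alternative that bypasses the explicit basis is to compute $\dim(\mathcal{H}_0)=\text{Tr}(\Pi_0)=\sum_{f\in\ker(\delta^0)}\bra{f}\mathcal{A}_0\ket{f}$, where $\mathcal{B}_0$ has already restricted the sum to $\ker(\delta^0)$ and each diagonal entry equals $|\im(\delta^{-1})|^{-1}$ (being $|\text{hom}(C,G)^{-1}|^{-1}$ times the number of $t$ with $\delta^{-1}t=0$); this yields the same ratio and requires only that $\Pi_0$ be a genuine projector, which we may assume from (\ref{def:P0}).
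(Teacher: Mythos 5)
Your proof is correct. Note, however, that the paper itself contains no argument for this theorem: its ``proof'' is a one-line citation to \cite{higher}, where the statement is the main result of that earlier work. So you have not so much taken a different route as supplied a route where the paper supplies none. Your argument is the standard stabilizer-code counting made rigorous in the homological-algebra language: character orthogonality together with non-degeneracy of the Pontryagin pairing shows that \(\mathcal{B}_0\) is the orthogonal projector onto the span of \(\{\ket{f} : f \in \text{ker}(\delta^0)\}\); group averaging shows that \(\mathcal{A}_0\) projects onto amplitude functions constant on \(\text{Im}(\delta^{-1})\)-cosets; and the inclusion \(\text{Im}(\delta^{-1}) \subseteq \text{ker}(\delta^0)\), guaranteed by \(\delta^0 \circ \delta^{-1} = 0\), turns the simultaneous fixed-point count into \(|\text{ker}(\delta^0)|/|\text{Im}(\delta^{-1})| = |\text{H}^0(C,G)|\), after which the product formula is just the Brown isomorphism \eqref{eq:BrownIsomorphism} at \(p = 0\), which the paper quotes from \cite{Brown}. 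This is fully consistent with the toolkit the paper deploys elsewhere: your character-sum dichotomy is essentially Proposition \ref{prop:traceless} of Appendix \ref{sec:app2}, and your coset/first-isomorphism-theorem bookkeeping mirrors the manipulations of Section \ref{sec:redrho}. Your alternative computation \(\dim(\mathcal{H}_0) = \text{Tr}(\Pi_0) = \sum_{f \in \text{ker}(\delta^0)} |\text{ker}(\delta^{-1})|\,/\,|\text{hom}(C,G)^{-1}| = |\text{ker}(\delta^0)|/|\text{Im}(\delta^{-1})|\) is also correct and is the most economical way to obtain the ratio, though it presupposes (as the paper does, via \eqref{def:P0}) that \(\Pi_0\) is the orthogonal projector onto \(\mathcal{H}_0\); your main argument, which verifies the two fixed-point conditions directly, does not need that assumption and therefore stands on its own.
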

\begin{proof}
The proof of this theorem can be found in \cite{higher} as its main result.
\end{proof}

\subsection{Example}
To see how this general framework works, let us review the example given in section \ref{sec:ExampleAHGT}, this time built from the formalism presented in \S \ref{sec:higher}.

\subsubsection{0,1,2-Gauge}\label{ex:12-chain}
The 0,1,2-gauge model presented in section \ref{ex:12} comes from the chain complex \((C,\partial^G)\) given in equation \eqref{eq.chain}. We show the relevant maps in figure \ref{fig:ChainMaps-1,2}, in particular:
\begin{itemize} 
\item Classical gauge configurations now consider degrees of freedom on plaquettes given by maps $f_2\in \Hom (C_2,G_2)$, in addition to the link configurations defined by $f_1\in \Hom (C_1,G_1)$, and the configurations for the vertices $f_0\in \Hom (C_0,G_0)$. In figure \ref{ex:12-chain} they are represented as straight lines.
\item The generalized notion of gauge transformations include 1-gauge transformations given by maps $t_0\in \Hom (C_0,G_1)$ and 2-gauge transformation, from every link to its neighbor plaquettes $t_1\in \text{Hom}(C_1,G_2)$. In figure \ref{ex:12-chain} they are represented as skewed dotted lines.
\item We have as holonomy maps: the 1-holonomy with the functions $m_1\in\Hom (C_1,\hat{G}_0)$, the 1-holonomy which is measured by maps $m_2\in\Hom (C_2,\hat{G}_1)$ and the $2$-holonomy measured by \(m_3\in \Hom (C_3,\hat{G}_2)\). They are nt represented in the figure, but they would be skewed in the opposite direction of the gauge transformations.
\end{itemize}
So the Hamiltonian in equation \eqref{Ham:12-gauge} is obtained by the decomposition for the maps:
\begin{align}
H =- \sum_{x\in K_0} \frac{1}{|G_1|}\sum_{g\in G_1}A_{e[0,x,g]}  - \sum_{x\in K_1} \frac{1}{|G_2|}\sum_{g\in G_2}A_{e[1,x,g]} \; + \nonumber \\ - \; \sum_{y\in K_1} \frac{1}{|G_0|}\sum_{r\in \hat{G}_0} B_{\hat{e}[1,y,r]}  - \; \sum_{y\in K_2} \frac{1}{|G_1|}\sum_{r\in \hat{G}_1} B_{\hat{e}[2,y,r]} - \sum_{y\in K_3} \frac{1}{|G_2|}\sum_{r\in \hat{G}_2} B_{\hat{e}[3,y,r]}.
\end{align}
\begin{figure}[h!]
	\begin{center}
		\begin{adjustbox}{max size={.7\textwidth}{.7\textheight}}
			\begin{tikzpicture}
			\matrix (m) [matrix of math nodes,row sep=2em,column sep=3em,minimum width=2em]
			{  0 & C_{3} & C_{2} & C_{1} & C_{0} & 0 \\
				& 0 & G_{2} & G_{1} & G_0 &0 \\};
			\path[-stealth]
			(m-1-1) edge node [above] {$$} (m-1-2)
			(m-1-2) edge node [above] {$\partial^C_3 $} (m-1-3)
			(m-1-3) edge node [above] {$\partial^C_{2} $} (m-1-4)
			edge node [left] {$f_2$} (m-2-3)
			(m-1-4) edge node [above] {$\partial^C_{1}$} (m-1-5)
			edge node [right] {$f_1$} (m-2-4)
			(m-1-5) edge node [above] {$$} (m-1-6)          		
			(m-2-2) edge node [below] {$ $} (m-2-3)
			(m-2-3) edge node [below] {$\partial^G_{2} $} (m-2-4)
			(m-2-4) edge node [below] {$ $} (m-2-5)
			(m-2-5) edge node [below] {$ $} (m-2-6);
			\path[dotted,->]    
			(m-1-2) edge node [left] {$\;m_{3}$} (m-2-3) 
			(m-1-4) edge node [right] {$\;t_{1}$} (m-2-3) 
			(m-1-5) edge node [right] {$\; t_{0}$} (m-2-4);
			\end{tikzpicture}
		\end{adjustbox}
	\end{center}
	\caption{\label{fig:ChainMaps-1,2} Chain complexes for the 0,1,2-gauge model \ref{ex:12-chain}.}
\end{figure}

Now that we have familiarized with the general theory with the help of the example, we are ready to proceed onto the next section, where we present the main results of this paper. The calculation of the entanglement entropy for all Abelian higher gauge theories can be carried out in general. The result obtained relates the entanglement entropy of an Abelian higher gauge theory to the \(GSD\) of a related theory, as we will precisely see. Furthermore we also compute the topological contribution to the entanglement entropy.

\section{\label{sec:entent}Entanglement Entropy in Abelian Higher Gauge Theories}
In this section we calculate the entanglement entropy for the class of models defined in \cite{higher} and reviewed in section \ref{sec:higher}. We begin by defining the bipartition of the  \(\left( C(K), \partial^C \right)\) chain complex into a subcomplex $\left(C(K_A),\partial^C_A\right)$ and its complement. We then observe that an associated higher gauge theory can be defined in the subcomplex $(C(K_A),\partial^C_A)$ which will be useful for both the calculation and the interpretation of the results. As usual, we begin by introducing the density matrix \(\rho\) in terms of the ground state projector of \eqref{def:P0}. The reduced density matrix \(\rho_A = \text{Tr}_B (\rho)\) is then obtained and shown to be best written in terms of the local operators of the higher gauge theory defined in the subcomplex \(\left(C(K_A),\partial^C_A\right)\). The entanglement entropy is the \emph{von Neumann} entropy of the reduced density matrix
\[S_A:= - \text{Tr} (\rho_A \log \rho_A).\]
The result we obtain relates this quantity to a restricted gauge theory in region \(A\). In particular, we show that the entanglement entropy of a higher gauge theory with Hamiltonian as in \eqref{eq:Hamiltonian} is equal to the logarithm of the ground state degeneracy \(GSD_{\tilde{A}}\) of a related higher gauge theory restricted to region \(A\), in other words
\[S_A = \log \left(GSD_{\tilde{A}}\right).\]
We further analyze the result to extract the topological information. As we will explain explain in section \ref{eq:sab_satop}, the entropy $S_A$ has two terms:
\begin{align}
	S_A = S_{\partial A} + S_{\text{Topo}},
\end{align}
where $S_{\partial A}$ scales with the size of the boundary and $S_{\text{Topo}}$ is a constant contribution depending on the topology of $A$ and its boundary $\partial A$. 

\subsection{Bipartition of the Geometrical Chain Complex}\label{sec:bipart}
We recall from Section \ref{sec:higher} that the geometrical content of the model is given by the chain complex \((C(K), \partial^C)\) encoding the lattice. We consider a simplicial chain complex for convenience. In order to calculate the entanglement entropy we first need to define the bipartition of the lattice. The system is divided into two regions $A$ and $B$, where $A$ is the region we have access, as in \cite{Hamma2005,Schmitz2018, Fattal2004}. 

We split the simplicial complex $K = \bigcup_{n=0}^{d}K_n$ into a subcomplex $K_A$ of dimension $d$ by choosing a smaller subset $K_{d,A} \subset K_d$  together with their subcomplexes of smaller dimension. Then \(K_A\) is a subcomplex as the boundary maps are well defined, i.e. their image belong to the subcomplex $K_A$. For each $0\leq n \leq d$ dimension the sets of \(n\)-simplices is divided in the form $K_n = K_{n,A} \cup K_{n,B}$, where $K_{n,A}$ is the set of $n$-simplices in region $A$ and $K_{n,B}$ the set of simplices in region $B$. We do this in such a way that $K_A = \bigcup_{n=0}^{d}K_{n,A}$ is a subcomplex of $K$. Note that in general the complement $K_B$ is not a simplicial complex on its own, as its boundary maps may have image in $K_A$.

Let $C_{n,A}$ be the $n$-chain group generated by the \(n\)-simplices, $x \in K_{n,A}$, of region \(A\).  Let also $\partial^{C}_{n,A}: C_{n,A} \rightarrow C_{n-1,A}$ be the restriction of the boundary map $\partial^C_n$ into the subset $K_{n,A}$. Clearly \(\partial_{n,A}^C \circ \partial_{n+1,A}^C = 0\). This makes \(\left(C(K_A), \partial^{C}_{A}\right)\) into a chain complex
\begin{equation} \label{eq:CAchain}
C_{d,A} \xrightarrow{\partial_{d,A}^C} C_{d-1,A} \xrightarrow{\partial_{d-1,A}^C} \cdots  \xrightarrow{\partial_{2,A}^C} C_{1,A} \xrightarrow{\partial_{1,A}^C} C_{0,A}.
\end{equation}
Let us apply the construction reviewed in Section \ref{sec:higher} to the \(\left(C(K_A), \partial^{C}_{A}\right)\) complex together with the same chain complex of Abelian groups in \eqref{eq:Gchain}, namely
\begin{equation}\label{eq:Gchain2}
0 \rightarrow G_d \xrightarrow{\partial_d^G}G_{d-1} \xrightarrow{\partial_{d-1}^G}\cdots \xrightarrow{\partial_2^G} G_1 \xrightarrow{\partial_1^G} G_0 \rightarrow 0.
\end{equation}

Homomorphisms between the two chain complexes (\ref{eq:CAchain}) and (\ref{eq:Gchain2}) can be constructed giving rise to the groups \[\homA^{p}:=\bigoplus_{n} \text{Hom}(C_{n,A},G_{n-p}).\] Elements of such groups are sequences of morphisms \(f_{n,A}:C_{n,A}\, \rightarrow \, G_{n-p}\) whose support lies on \(K_A\). For example, a gauge configuration on region \(A\) is an assignment of a group element \(g \in G_n\) for each element \(x \in K_{n,A}\). This is, a collection of maps \(f_A=\{f_{n,A}\}\) for \(n=0,1,\dots, d\), where:
\begin{align*}
f_{n,A}:K_{n,A}\, & \rightarrow \, G_{n}, \\
x & \mapsto f_{n,A}(x).
\end{align*}
Therefore, gauge configurations in \(A\) can be viewed as elements of the group  \[\homA^{0}=\bigoplus_{n} \text{Hom}(C_{n,A},G_{n}).\]
The set of vectors \(\{\ket{f}_A\}\), labeled by group elements $f \in \homA^{0}$, form a basis of the Hilbert space $\mathcal{H}_A$. In other words, a state \(\ket{\Psi}_A \in \mathcal{H}_A\) is written as
\begin{align*}
\ket{\Psi}_A = \sum_{f\in\homA^0}\Psi(f)\ket{f}_A.
\end{align*}
Similarly, we have the group $\homA^{-1}$ whose elements serve as parameters for the higher gauge transformations as well as the dual group $\homA_{1}$ that parametrizes the higher gauge holonomy operators. More importantly, a higher gauge theory in region \(A\) can be defined by considering a chain complex similar to the one in \eqref{eq:homcoChain}, that is, 
\begin{align}\label{eq:homcochainA}
\homA^{-1} \xrightarrow{\delta_A^{-1}} \homA^{0}\xrightarrow{\delta_A^{0}} \homA^1,
\end{align}
where the co-boundary map \(\delta_A^{p}: \homA^{p} \rightarrow \homA^{p-1}\) is defined by
\begin{align*}
(\delta_A^{p}f)_n:=f_{n-1,A}\circ \partial_{n,A}^{C} - (-1)^p\partial_{n-p}^{G}\circ f_{n,A}. 
\end{align*}
\subsection{Reduced Density Matrix}\label{sec:redrho}

As usual, we start by introducing the density matrix \(\rho\) of the model with Hamiltonian \eqref{eq:Hamiltonian}, given by
\begin{equation}\label{eq:rho}
\rho := \dfrac{\Pi_0}{\text{tr}\left(\Pi_0\right)}=\dfrac{\Pi_0}{GSD},
\end{equation}
where \(\Pi_0: \mathcal{H} \rightarrow \mathcal{H}_0\) is the ground state projector of eq. \eqref{def:P0} and \(GSD\) stands for the \emph{ground state degeneracy} of eq. \eqref{eq:GSD}. We start from a product state, a linear combination of the ground space states, which are independent states.

From \eqref{def:P0} we know that the ground state projector \(\Pi_0\) can be written in terms of the projectors in \eqref{eq:projA0} and \eqref{eq:projB0} as
\begin{align}\label{eq:P0}
\Pi_0 =& \left(\dfrac{1}{|\hom^{-1}|}\; \sum_{t \in \hom^{-1}}\, A_t\,\right)\left(\dfrac{1}{|\hom_{1}|}\; \sum_{m \in \hom_{1}}\, B_m\,\right).
\end{align}

However, we want to re-parametrize the two sums in the above equation such that they run over non trivial elements only. In other words, we want to factor the redundancies out of the sums, reframing the sum as sum over classes. This can be achieved by looking at the group structure of \(\hom^{-1}\) and \(\hom_1\). Take for instance \(\hom^{-1}\) whose elements parametrize the higher gauge transformations of the theory. The redundancies in the sum over \(t \in \hom^{-1}\) of \eqref{eq:P0} come from elements that act trivially over quantum states (examples of such elements are shown in Section \ref{sec:Examples}). Recall that gauge transformations act on actual states by means of the \(\delta^{-1}\) operator. Thus, we can identify the elements of \(\hom^{-1}\) that act trivially on states: they form a subgroup of \(\hom^{-1}\) called the \emph{kernel} and given by \(\text{ker}(\delta^{-1}) := \{t \in \hom^{-1} \,|\, \delta^{-1}(t)=0\}\), where \(0 \in \hom^0\) is the identity element that labels the trivial gauge configuration. Morever, non-trivial gauge transformations are parametrized by elements of \(\hom^{-1}\) that are not mapped to the identity by \(\delta^{-1}\), they define a subgroup of \(\hom^0\) known as \emph{image} and denoted \(\text{Im}(\delta^{-1})\). Both the kernel and the image of the co-boundary map, \(\delta^{-1}\), are related to each other by the first isomorphism theorem \cite{basicalg} which in this case reads
\begin{align}\label{eq:1-iso}
\dfrac{\hom^{-1}}{\text{ker}(\delta^{-1})} \simeq \text{Im}(\delta^{-1}).
\end{align}
Elements of the quotient group in the above expression are the cosets of \(\ker (\delta^{-1})\) in \(\hom^{-1}\). This is: \[\frac{\hom^{-1}}{\ker (\delta^{-1})} := \{[t]\;|\, t \in \hom^{-1}\},\] where the coset \([t] = \{t + h_i, \, h_i \in \ker(\delta^{-1})\}\) consists on all elements of \(\hom^{-1}\) that differ from \(t\) by an element in \(\ker(\delta^{-1})\). This is precisely what we need to factor the sums in \eqref{eq:P0}. The sum over \(t \in \hom^{-1}\) can be replaced by a sum over the cosets of \(\ker(\delta^{-1})\) in \(\hom^{-1}\) as follows:
\[\sum_{t \in \hom^{-1}}\, A_t=\sum_{[s] \in \frac{\hom^{-1}}{\text{ker}(\delta^{-1})} }|\ker(\delta^{-1})|\,A_s,\] 
where \(s \in [s]\) is an arbitrary representative of the coset.
A similar argument holds for the sum over \(m \in \hom_1\) which allows to factor out the redundancies from the second sum in \eqref{eq:P0}. By doing this, we ensure that the sums run over independent group elements only:
\begin{align}\label{eq:P0-factored}
\Pi_0=& \left(\dfrac{1}{|\text{Im}(\delta^{-1})|}\sum_{[t] \in \frac{\hom^{-1}}{\text{ker}(\delta^{-1})} }A_t\right)\left(\dfrac{1}{|\text{Im}(\delta_{1})|}\sum_{[m] \in \frac{\hom_1}{\text{ker}(\delta_{1})} }B_m\right),
\end{align}
note that  we have used \(|\hom^{-1}| = |\ker(\delta^{-1})||\text{Im}(\delta^{-1})|\) to simplify the normalization factor of the first sum. A similar identity holds for the second sum. This leaves us with the density matrix of \eqref{eq:rho} written as:
\begin{align}\label{rho}
\rho = \dfrac{1}{GSD}\dfrac{1}{|\text{Im}(\delta^{-1})||\text{Im}(\delta_{1})|}\left(\sum_{[t] \in \frac{\hom^{-1}}{\text{ker}(\delta^{-1})}}A_t\right)\left(\sum_{[m] \in \frac{\hom_1}{\text{ker}(\delta_{1})} }B_m\right).
\end{align}

We can now proceed to the calculation of the reduced density matrix. Let us consider the bipartition of the geometric chain complex  \(\left( C(K), \partial^C \right)\) described in section \ref{sec:bipart}. This procedure splits the Hilbert space into two subspaces \(\mathcal{H}=\mathcal{H}_A\otimes \mathcal{H}_B\). Then, we obtain the reduced density matrix by taking the partial trace over region \(B\), this is:
\begin{align}
\rho_A := \text{Tr}_B (\rho).
\end{align}

To evaluate the partial trace we consider a basis \(\{\ket{f_{n,B}}\}\), where \(f_{n,B} \in \homB^0\) is the restriction of the tensor product \ref{eq:ortho-basis} over simplexes only belonging to $B$. Therefore we can write the basis dividing the tensor product:
\begin{align}
\ket{f} = \ket{f_A}\otimes \ket{f_B},
\end{align}
where $\ket{f_A}$ is the tensor product over the elements belonging to the subcomplex $A$, the same for complementary $B$. For simplicity, let us denote its basis as \(\{\ket{b_i}\} \), with $i = 1, 2 \dots, \dim(\mathcal{H}_B)$. The reduced density matrix is now written:
\begin{align}\label{eq:red-pretrace}
\rho_A = \dfrac{1}{GSD}\dfrac{1}{|\text{Im}(\delta^{-1})||\text{Im}(\delta_{1})|}\sum_{i} \bra{b_i}\left(\sum_{[t]} \sum_{[m]}\,A_t B_m\right)\ket{b_i},
\end{align}
where \([t]\in \frac{\hom^{-1}}{\text{ker}(\delta^{-1})}\) and \([m] \in \frac{\hom_1}{\text{ker}(\delta_{1})}\). Both $A_t$ and $B_m$ are traceless operators unless they are equal to the identity operator, as we show in Proposition \ref{prop:traceless} of Appendix \ref{sec:app2}. Because of this, the only terms that survive the partial trace are those for which $A_t$ and $B_m$ act as identity in \(\mathcal{H}_B\). In other words, the only operators that survive the partial trace are those that act exclusively on \(\H_A\). In terms of local operators this means that all holonomy operators labeled by simplexes that belong to region \(A\) will survive the trace. The case of gauge transformations is more subtle since we need to discard gauge transformations labeled by elements that lie at the boundary of \(A\) as well, see Appendix \ref{sec:app2}.

To account for such operators we use the restricted gauge theory defined by the complex in \eqref{eq:homcochainA} with a slight modification. We are interested on further restricting such a theory by allowing gauge transformations that act on the interior of $K_A$ only. In other words, we want to discard the gauge transformations at the boundary \(\partial A\). To this intent, let us give a more precise notion of \emph{interior} of $A$. Let \(K_{n,\tilde{A}}=\{x \in K_{n,A}\, |\, x \cap \partial A = \emptyset \}\) be the set of all \(n\)-simplices that have no intersection with the boundary of \(A\). Then, the \emph{interior} of \(A\) is the set \(\tilde{A} : = \bigcup_{n=0}^{d} K_{n, \tilde{A}}\label{def:intA}\). 

Recall from section \ref{sec:higher} that the higher gauge transformations of a higher gauge theory are parametrized by elements of the group \(\hom^{-1}\). Thus, to account for gauge transformations that act exclusively on the interior of \(A\) we just need to consider the subgroup of  \(\homA^{-1}\)  whose support is contained on \(\tilde{A}\) only. This can be done using the notion of interior of \(A\) as follows:  consider homomorphisms whose support lie on the interior of $A$, namely \(\text{Hom}(C_{n,\tilde{A}},G_{n+1})\). Let then: 
\[\homtA^{-1} := \bigoplus_{n=0}^{d}\text{Hom}(C_{n,\tilde{A}},G_{n+1}),\]
with elements \(f \in \homtA^{p}\) consisting on collections of maps \(f=\{f_n\}\):
\begin{align}
f_n: K_{n,\tilde{A}} &\rightarrow G_{n+p}, \\
     x & \mapsto f_n(x),
\end{align}
where \(x \in K_{n,\tilde{A}}\) and \(f_n(x) \in G_{n+p}\).

It is straightforward to show that \(\homtA^{p} \) is  a subgroup of \(\homA^{p}\). Moreover, we can define the restriction of the co-coundary operator \(\delta^p\) into the interior of $A$. This is, \(\delta_{\tilde{A}}^p:= \delta^p\vert_{\tilde{A}}\), such that the sequence
\begin{align}\label{eq:redhomcochainA}
\homtA^{-1} \xrightarrow{\delta_{\tilde{A}}^{-1}} \homA^{0}\xrightarrow{\delta_A^{0}} \homA^1,
\end{align}
is a co-chain complex, i.e., \(\delta_A^{0}\circ\delta_{\tilde{A}}^{-1}= 0\). This co-chain complex encodes an Abelian higher gauge theory over \(K_A\) whose gauge transformations are restricted to act on the interior of \(A\) only.

We can now return to equation \eqref{eq:red-pretrace} and evaluate the partial trace of the density matrix \(\rho\), which yields 
\begin{align}\label{eq:rhoA1}
\rho_A = \dfrac{1}{GSD}\dfrac{1}{|\text{Im}(\delta^{-1})||\text{Im}(\delta_{1})|}\left(\sum_{p, q }A_{p} B_{q}\right)\text{Tr}_B(\mathbb{1}_B),
\end{align}
where the sums now run over independent internal gauge transformations 
\begin{align}\label{eq:p} 
p \in \frac{\homtA^{-1}}{\text{ker}(\delta_A^{-1})},
\end{align}
and non-trivial holonomy values in $A$
\begin{align}\label{eq:q} 
 q \in \frac{\homA_1}{\text{ker}(\delta_{1}|_A)}.
\end{align} 
Finally, observing that \(\text{Tr}(\mathbb{1}_B)= \dim (\mathcal{H}_B)\) we get for the reduced density matrix:
\begin{align*}
\rho_A = \dfrac{1}{GSD}\dfrac{\dim( \mathcal{H}_B)}{|\text{Im}(\delta^{-1})||\text{Im}(\delta_{1})|}\left(\sum_{p, q }A_{p} B_{q}\right).
\end{align*}

The above expression can be further simplified as follows: Observe that by applying the first isomorphism theorem \cite{basicalg} on the sequence of Eq.(\ref{eq:homcoChain}) it is easy to show that the dimension of the Hilbert space factors into \[\dim (\mathcal{H})=\left|\hom^0\right|=\left|\text{ker}(\delta^0)\right|\left|\text{Im}(\delta^{0})\right|.\] Also, in Appendix \ref{sec:app3} we show that \(\left|\text{Im}(\delta^0)\right|=\left|\text{Im}(\delta_{1})\right|\) this allows us to write:
\[GSD \,|\text{Im}(\delta^{-1})||\text{Im}(\delta_{1})| =  \dim(\mathcal{H})=\dim(\mathcal{H}_A) \dim(\mathcal{H}_B),\]
which in turn yields for the reduced density matrix,
\begin{equation}\label{eq:rhoA2}
\rho_A = \dfrac{1}{\dim(\mathcal{H}_A)}\left(\sum_{p, q }A_{p} B_{q}\right).
\end{equation}

\subsection{Entanglement Entropy}\label{sec:EE}
Having found the reduced density matrix in \eqref{eq:rhoA2} we are able to calculate its Von Neumann entropy, also known as \emph{entanglement entropy}. This calculation will require us to evaluate the logarithm of \(\rho_A\) at some point and this is usually done using a series expansion. In this sense, we will start by calculating the square of \(\rho_A\):
\begin{align*}
\rho_A^2 =\; \dfrac{1}{\dim(\mathcal{H}_A)^2}\left(\sum_{p, q }A_{p} B_{q}\right)\left(\sum_{p^\prime, q^\prime }A_{p^\prime} B_{q^\prime}\right)\;
 =\: \dfrac{|\text{Im}(\delta_A^{0})||\text{Im}(\delta_{\tilde{A}}^{-1})|}{\dim(\mathcal{H}_A)^2}\left(\sum_{p, q }A_{p} B_{q}\right),
\end{align*}
where in the last equality the factors in the numerator come from rearranging the sums over \(p^\prime \in  \frac{\homtA^{-1}}{\text{ker}(\delta_A^{-1})}\) and over \(q^\prime \in \frac{\homA_1}{\text{ker}(\delta_{1}|_A)}\). This leaves for the square of the density matrix:
\begin{align}
\rho_A^2 = \dfrac{|\text{Im}(\delta_A^{0})||\text{Im}(\delta_{\tilde{A}}^{-1})|}{\dim(\mathcal{H}_A)} \rho_A
= \lambda\,  \rho_A.
\end{align}
Now we can calculate the logarithm of \(\rho_A\) by series expansion, which yields: $\log(\rho_A) = \frac{\log(\lambda)}{\lambda}\rho_A.$
Finally the entanglement entropy is:
\begin{align}\label{eq:SAlambda}
S_A = -\text{Tr}(\rho_A \log(\rho_A))
 = -\text{Tr}(\rho_A \log(\lambda))=
 \log(1/\lambda) \text{Tr}(\rho_A)
 = \log(1/\lambda),
\end{align}
 where we have used $\text{Tr}(\rho_A)=1$. Let us look at the \(\lambda\) factor more carefully, since it encodes the essential information about the entanglement entropy of the model. By recalling that \(\dim(\mathcal{H}_A) = \left|\homA^0\right|=\left|\text{ker}(\delta_A^0)\right|\left|\text{Im}(\delta_A^{0})\right| \), we are able to write:
\begin{align}\label{eq:lambda}
\dfrac{1}{\lambda} =\;\dfrac{\dim(\mathcal{H}_A)}{\left|\text{Im}(\delta_A^{0})\right|\left|\text{Im}(\delta_{\tilde{A}}^{-1})\right|}\;
= \;\dfrac{\left|\text{ker}(\delta_A^0)\right|}{\left|\text{Im}(\delta_{\tilde{A}}^{-1})\right|},
 \end{align}
 Equation (\ref{eq:lambda}) is already very interesting since it relates \(1/\lambda\) to the ground state degeneracy \((GSD)\) of the model restricted to $\mathcal{H}_A$ and for which gauge transformations act in the interior of $A$ only. By replacing this expression into Eq.(\ref{eq:SAlambda}) we are able to state our first result, that the entanglement entropy is given by:
\begin{equation}\label{eq:SAmain}
S_A= \log\left(\dfrac{\left|\text{ker}(\delta_A^0)\right|}{\left|\text{Im}(\delta_{\tilde{A}}^{-1})\right|}\right) = \log \left(GSD_{\tilde{A}} \right).
\end{equation}
We want to highlight that the only requirement we asked for the bipartition is that the simplicial complex $K$ is divided into a subcomplex $K_A$ and its complement. Therefore, this result is very general since it is valid for any higher gauge theory of the type described in Sections \ref{sec:ExampleAHGT}, \ref{sec:higher} and constructed in \cite{higher} and for any arbitrary dimension.

\subsection{Topological Entanglement Entropy}\label{sec:tee}
Having a general result for the bipartition entanglement entropy $S_A$, given by equation (\ref{eq:SAmain}), it is natural to ask if it is possible to extract from it both the area law and the sub-leading, possibly topological terms, explicitly exhibiting its dependency on both the geometry and the topology of the region $A$. Here we give an answer to this question and show how $S_A$ depends on $A$. We demonstrate that the topological contribution to the entanglement entropy comes from topological invariants of the sub-region $A$ and from topological invariants of the entangling surface $\partial A$. These invariants are related to Brown's cohomology groups \cite{Brown}. One can express the cohomology groups of Brown as a product of all cohomology groups of the manifold in question with coefficients in the homology groups of the chain complex of groups (\ref{eq:Gchain2}) (see \cite{Brown} for further details). Thus, for higher gauge theories, the topological entanglement entropy depends on the Betti numbers of $A$ and $\partial A$, but the actual relation is more involved than the case of $1$-gauge theories. 

\par Let's consider equation (\ref{eq:SAmain}). It depends on the quantity $GSD_{\tilde{A}}$, the ground state degeneracy of a model restricted to the region $A$ without gauge transformations at the boundary $\partial A$. This quantity can be rewritten in the following way: multiplying and dividing it by $|\text{Im}(\delta^{-1}_{A})|$, i.e., the order of the group image of $\delta^{-1}_{A}$, we have
\begin{equation}\label{eq:GSD_rew}
	GSD_{\tilde{A}} = GSD_{\tilde{A}}\frac{|\text{Im}(\delta^{-1}_{A})|}{|\text{Im}(\delta^{-1}_{A})|} = \frac{|\text{ker}(\delta^0_A)|}{|\text{Im}(\delta^{-1}_{\tilde{A}})|}\frac{|\text{Im}(\delta^{-1}_{A})|}{|\text{Im}(\delta^{-1}_{A})|} = GSD_A \frac{|\text{Im}(\delta^{-1}_A)|}{|\text{Im}(\delta^{-1}_{\tilde{A}})|},
\end{equation}
where, from theorem \ref{thm:main}, $GSD_A = |\text{ker}(\delta^0_{A})|/|\text{Im}(\delta^{-1}_A)|$ is the ground state degeneracy of the model restricted to region $A$. As explained in section \ref{sec:higher}, $GSD_A = |\text{H}^{0}(C_A, G)|$, where $|\text{H}^{0}(C_A, G)|$ is the order of the zeroth Brown cohomology group of the complex $(C_A, G)$. Therefore, $GSD_A$ is a topological invariant of $A$. However, there are more topological contributions to the entanglement entropy coming from the entangling surface $\partial A$, which can have non-trivial topology. They are hidden in the term $|\text{Im}(\delta^{-1}_{A})|/|\text{Im}(\delta^{-1}_{\tilde{A}})|$, which essentially counts the number of non-equivalent gauge transformations labelled by simplexes at the boundary $\partial A$ of $A$. To see this, first we note that, in the same way we constructed a co-chain complex that encodes an Abelian higher gauge theory defined on the region $A$ but with higher gauge transformations restricted to act on the interior $\tilde{A}$ of $A$, we can also construct a co-chain complex that represents an Abelian higher gauge theory defined on $A$ but with higher gauge transformations that act only on the boundary $\partial A$ of $A$.

\par Let $K_{n,\partial A} = \{x \in K_{n,A}|x \cap \tilde{A} = \emptyset\}$ be the set of $n$-simplices that have no intersection with the interior of $A$. Clearly, $K_{n, \tilde{A}} \cap K_{n, \partial A} = \emptyset$ and $K_{n, \tilde{A}} \cup K_{n, \partial A} = K_{n,A}$. Higher gauge transformations over $A$ are parameterized by elements of the group $\homA^{-1}$. Therefore, the transformations that act over the boundary of $A$ are parameterized by the subgroup $\homdA^{-1} \subset \homA^{-1}$, whose elements are collections of maps $f = \{f_n\}$ such that, for each $0\le n \le d$, $f_n: K_{n, \partial A} \to G_{n+1}$ has support in the boundary of A. Then, we define the restriction $\delta^{-1}_{\partial A} = \delta^{-1}_{A}|_{\partial A}$ of the co-boundary map $\delta^{-1}_{A}$, therefore
\begin{align}\label{eq:redhomcochaindA}
\homdA^{-1} \xrightarrow{\delta_{\partial A}^{-1}} \homA^{0}\xrightarrow{\delta_A^{0}} \homA^1
\end{align}
is a co-chain complex since $\delta_A^{0} \circ \delta_{\partial A}^{-1} = 0$. This co-chain complex defines an Abelian higher gauge theory on $A$ with higher gauge transformations labelled by elements of $\partial A$. From (\ref{eq:redhomcochainA}) and (\ref{eq:redhomcochaindA}), we can establish the following result:

\begin{lem}\label{eq:lem:1}
	\begin{equation}
		\frac{|\text{Im}(\delta^{-1}_A)|}{|\text{Im}(\delta^{-1}_{\tilde{A}})|} = |\text{Im}(\delta^{-1}_{\partial A})|.
	\end{equation}
\end{lem}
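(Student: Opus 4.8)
The plan is to exploit the fact that the partition $K_{n,A}=K_{n,\tilde A}\sqcup K_{n,\partial A}$ of the simplices of $A$ into interior and boundary simplices induces a splitting of both the domain and the target of $\delta^{-1}_A$. Since $C_{n,A}$ is freely generated by $K_{n,A}$, we get $C_{n,A}=C_{n,\tilde A}\oplus C_{n,\partial A}$, and hence the direct sum decompositions $\homA^{-1}=\homtA^{-1}\oplus\homdA^{-1}$ and $\homA^{0}=\bigoplus_n\left(\Hom(C_{n,\tilde A},G_n)\oplus\Hom(C_{n,\partial A},G_n)\right)$. Writing any $t\in\homA^{-1}$ as $t=t_{\tilde A}+t_{\partial A}$ with $t_{\tilde A}\in\homtA^{-1}$ and $t_{\partial A}\in\homdA^{-1}$, and using linearity of $\delta^{-1}_A$ together with the fact that $\delta^{-1}_{\tilde A}$ and $\delta^{-1}_{\partial A}$ are precisely its restrictions to these two summands, I would first record that $\text{Im}(\delta^{-1}_A)=\text{Im}(\delta^{-1}_{\tilde A})+\text{Im}(\delta^{-1}_{\partial A})$ as subgroups of $\homA^0$.

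The key step is to show that these two images sit inside \emph{complementary} summands of $\homA^0$, so that their intersection is trivial. For this I would use the locality of the coboundary map (\ref{eq:cob_op}), namely $(\delta^{-1}_A t)_n(x)=t_{n-1}(\partial^C_{n}x)+\partial^G_{n+1}(t_n(x))$, together with the elementary geometric observation that if $x'$ is a face of $x$ then $x'\cap S\subseteq x\cap S$ for any set $S$; in particular every face of an interior simplex is interior and every face of a boundary simplex is boundary. Hence, for $t\in\homtA^{-1}$ and a boundary simplex $x\in K_{n,\partial A}$, both $t_n(x)$ and all the values of $t_{n-1}$ on the faces appearing in $\partial^C_n x$ vanish, so $\text{Im}(\delta^{-1}_{\tilde A})$ is supported on interior simplices only; by the symmetric computation, $\text{Im}(\delta^{-1}_{\partial A})$ is supported on boundary simplices only. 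Since the interior-supported part $\bigoplus_n\Hom(C_{n,\tilde A},G_n)$ and the boundary-supported part $\bigoplus_n\Hom(C_{n,\partial A},G_n)$ are complementary direct summands of $\homA^0$, this forces $\text{Im}(\delta^{-1}_{\tilde A})\cap\text{Im}(\delta^{-1}_{\partial A})=\{0\}$.

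With the trivial intersection in hand, the conclusion is a one-line application of the order formula for the sum of two subgroups of a finite abelian group, $|\text{Im}(\delta^{-1}_{\tilde A})+\text{Im}(\delta^{-1}_{\partial A})|=|\text{Im}(\delta^{-1}_{\tilde A})|\,|\text{Im}(\delta^{-1}_{\partial A})|/|\text{Im}(\delta^{-1}_{\tilde A})\cap\text{Im}(\delta^{-1}_{\partial A})|$, which combined with the first paragraph gives $|\text{Im}(\delta^{-1}_A)|=|\text{Im}(\delta^{-1}_{\tilde A})|\,|\text{Im}(\delta^{-1}_{\partial A})|$ and hence the stated identity after dividing by $|\text{Im}(\delta^{-1}_{\tilde A})|$. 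I expect the only real content — and the main place to be careful — to be the support/locality argument of the second paragraph: one must verify that the definitions of $K_{n,\tilde A}$ and $K_{n,\partial A}$ genuinely make faces of interior simplices interior and faces of boundary simplices boundary, since this is exactly the property that prevents $\delta^{-1}$ from mixing the interior and boundary sectors.
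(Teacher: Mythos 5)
Your reduction of the lemma to the triviality of $\im(\delta^{-1}_{\tilde A})\cap\im(\delta^{-1}_{\partial A})$ is internally consistent (given the splitting $\homA^{-1}=\homtA^{-1}\oplus\homdA^{-1}$, the sum decomposition of images and the product formula are fine), but the locality argument you use for the key step is wrong, and the trivial-intersection claim itself is false in the setting where your argument lives. The monotonicity observation $x'\cap S\subseteq x\cap S$ yields only \emph{one} of your two geometric claims: if $x$ does not meet $\partial A$ then neither does any face of $x$, so faces of interior simplices are interior. It gives nothing for boundary simplices, and indeed "every face of a boundary simplex is boundary" is false: with $K_{n,\partial A}=K_{n,A}\setminus K_{n,\tilde A}$ (which is the partition needed for the direct sum you invoke), a simplex touching $\partial A$ can perfectly well have faces that do not touch it. Since by \eqref{eq:cob_op} the value $(\delta^{-1}t)_n(x)=t_{n-1}(\partial^C_n x)+\partial^G_{n+1}(t_n(x))$ picks up the values of $t$ on the \emph{faces} of $x$, an interior-supported $t$ generically produces a configuration that is nonzero on boundary simplices, so $\im(\delta^{-1}_{\tilde A})$ does not sit inside the interior summand of $\homA^{0}$, and the complementary-support argument collapses.

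Concretely, take the $\Z_2$ $1$-gauge theory (toric code) and let $A$ be a disk of triangles around a single interior vertex $v_0$, with rim vertices $v_1,\dots,v_k$ lying on $\partial A$, spokes $e_{0i}$ and rim edges $e_{i,i+1}$; the spokes are exactly the simplices violating your dichotomy (boundary simplices with the interior face $v_0$). The interior transformation $t(v_0)=1$ and the boundary transformation $t(v_1)=\cdots=t(v_k)=1$ produce the \emph{same} nonzero element of $\homA^0$: every spoke flipped, every rim edge unchanged. Hence, with $\delta^{-1}_{\partial A}$ read as in \eqref{eq:redhomcochaindA}, i.e.\ as the restriction of $\delta^{-1}_A$ mapping into $\homA^0$ (the only reading under which your intersection makes sense), one has $\im(\delta^{-1}_{\tilde A})\cap\im(\delta^{-1}_{\partial A})\neq\{0\}$ and $|\im(\delta^{-1}_A)|<|\im(\delta^{-1}_{\tilde A})|\,|\im(\delta^{-1}_{\partial A})|$ — in this example $4\neq 8$ — so your argument, if valid, would prove a false identity. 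This overlap between interior and boundary gauge transformations is precisely what the lemma must account for, and the paper handles it by counting \emph{kernels} rather than images: it proves $|\ker(\delta^{-1}_A)|=|\ker(\delta^{-1}_{\tilde A})|\,|\ker(\delta^{-1}_{\partial A})|$ via the restriction-to-$\partial A$ isomorphism on $\ker(\delta^{-1}_A)/\ker(\delta^{-1}_{\tilde A})$ (in the wheel example the constant rim transformation is the nontrivial element of $\ker(\delta^{-1}_{\partial A})$, the kernel being that of the gauge theory intrinsic to $C_{\partial A}$, which is also what the recursion \eqref{eq:rec_rel} uses), then combines $|\homA^{-1}|=|\homtA^{-1}|\,|\homdA^{-1}|$ with the first isomorphism theorem applied to each of the three maps. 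That route never requires the two images to be independent subgroups of $\homA^0$, which, as the example shows, they are not.
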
  
\begin{proof}
	See appendix \ref{app_lemma}.
\end{proof}
\par We can then focus in studying the set $\text{Im}(\delta^{-1}_{\partial A})$. First, note that we can construct, for any $-d \le p \le d$, the group $\homdA^{p} = \bigoplus_{n=0}^{d-1}\text{Hom}(C_{n, \partial A}, G_{n-p})$ and the co-boundary map $\delta^{p}_{\partial A}: \homdA^{p} \to \homdA^{p+1}$, just as we did above. Therefore, we can construct the co-chain complex 
	\begin{align}\label{eq:redhomcochaindA2}
	\homdA^{-d} \xrightarrow{\delta_{\partial A}^{-d}} \cdots \xrightarrow{\delta_{\partial A}^{-2}}  \homdA^{-1} \xrightarrow{\delta_{\partial A}^{0}} \homdA^1 \to \cdots,
	\end{align}
and, for this co-chain complex, we can also define the Brown cohomology groups
\begin{equation}\label{eq:browncohobA}
	\text{H}^p(C_{\partial A}, G) = \text{ker}(\delta^{p}_{\partial A})/\text{Im}(\delta^{p-1}_{\partial A}).
\end{equation}
Now, for any $-d \le p \le d$, we have that
\begin{equation}\label{eq:rec_rel}
|\text{Im}(\delta^{p}_{\partial A})| = \frac{|\homdA^{p}|}{|\text{H}^{p}(C_{\partial A}, G)||\text{Im}(\delta^{(p-1)}_{\partial A})|}.
\end{equation}
Indeed, since $\delta^p_{\partial A}: \homdA^{p} \to \homdA^{p+1}$ is a group homomorphism, the first isomorphism theorem says that
\begin{equation}
	\homdA^{p}/\text{ker}(\delta_{\partial A}^p) \cong \text{Im}(\delta^{p}_{\partial A}),
\end{equation}
which implies that
\begin{equation} \label{eq:rec_re1}
	|\text{Im}(\delta^{p}_{\partial A})| = \frac{|\homdA^{p}|}{|\text{ker}(\delta_{\partial A}^p)|}.
\end{equation}
Now, from equation (\ref{eq:browncohobA}), we have that
\begin{equation}
	|\text{H}^{p}(C_{\partial A}, G)| = \frac{|\text{ker}(\delta^{p}_{\partial A})|}{|\text{Im}(\delta^{p-1}_{\partial A})|},
\end{equation}
so, substituting $|\text{ker}(\delta^{p}_{\partial A})| = |\text{H}^{p}(C_{\partial A}, G)||\text{Im}(\delta^{p-1}_{\partial A})|$ into (\ref{eq:rec_re1}), we get (\ref{eq:rec_rel}). This equation is a recursion relation that allows us to write $|\text{Im}(\delta^{p}_{\partial A})|$ in terms of $|\text{Im}(\delta^{p-1}_{\partial A})|$. In particular, for $p = -1$,
\begin{equation}
	|\text{Im}(\delta^{-1}_{\partial A})| = \frac{|\homdA^{-1}|}{|\text{H}^{-1}(C_{\partial A}, G)||\text{Im}(\delta^{-2}_{\partial A})|}.
\end{equation}
Applying \eqref{eq:rec_rel} once more, we have
\begin{equation}\label{eq:example_recursion}
	|\text{Im}(\delta^{-1}_{\partial A})| = \frac{|\homdA^{-1}||\text{H}^{-2}(C_{\partial A}, G)|}{|\text{H}^{-1}(C_{\partial A}, G)||\homdA^{-2}|}|\text{Im}(\delta^{-3}_{\partial A})|.
\end{equation}
This procedure can be continued until we finally reach 
\begin{equation}
	|\text{Im}(\delta^{-d}_{\partial A})| = \frac{|\homdA^{-d}|}{|\text{H}^{-d}(C_{\partial A}, G)||\text{Im}(\delta^{-(d+1)}_{\partial A})|}.
\end{equation}
This ends the recursion since $|\text{Im}(\delta^{-(d+1)}_{\partial A})| = 1$ because $\homdA^{-(d+1)} = \{0\}$ and $\delta^{-(d+1)}_{\partial A}$ is the inclusion map. Putting all together, we have
\begin{align}\label{eq:im_result}
	|\text{Im}(\delta^{-1}_{\partial A})| = \prod_{p=1}^d|\homdA^{-p}|^{\alpha}|\text{H}^{-p}(C_{\partial A}, G)|^{-\alpha},
\end{align}
where $\alpha = 1$ when $p$ is odd and $\alpha = -1$ when $p$ is even. Thus, we can write $|\text{Im}(\delta^{-1}_{\partial A})|$ as a product of terms that depend on the geometry of the entangling surface $\partial A$, i.e., the $|\homdA^{p}|$ terms, with terms that depend on the topology of $\partial A$, that is, the $|\text{H}^{p}(C_{\partial A}, G)|$ terms, for $-d \le p \le -1$.
\par We can now go back to the entaglement entropy $S_A$. We have that, from equations (\ref{eq:SAmain}), (\ref{eq:GSD_rew}) and lemma \ref{eq:lem:1},
\begin{equation}
	S_A = \text{log}\left(GSD_{\tilde{A}}\right) = \text{log}(GSD_A) + \text{log}\left(|\text{Im}(\delta^{-1}_{\partial A})|\right).
\end{equation}  
From \eqref{eq:im_result}, we have
\begin{align}
	\text{log}\left(|\text{Im}(\delta^{-1}_{\partial A})|\right) = \sum_{p = 1}^d (-1)^{p+1} \text{log}\left(|\homdA^{-p}|\right) + \sum_{p = 1}^d(-1)^p \text{log}\left(|\text{H}^{-p}(C_{\partial A}, G)|\right). \nonumber
\end{align}
Now, as explained in \cite{higher}, here we are considering the case where there are finitely many $n$'s such that $K_n$ is non-empty, and each of these non-empty sets are also finite. Moreover, all groups $G_n$ appearing in the chain complex $(G, \partial^G)$ are finite. Thus, the groups $\hom^p$ are finite for all $p$, and their order is given by
\begin{equation}
	|\hom^p| = \prod_{n=0}^d |\text{Hom}(C_n, G_{n-p})| = \prod_{n=0}^d |G_{n-p}|^{|K_n|}.
\end{equation}
In particular,
\begin{equation}
	|\homdA^{-p}| = \prod_{n=0}^{d-1} |G_{n+p}|^{|K_{n,\partial A}|},
\end{equation}
and thus
\begin{align}
	\text{log}\left(|\homdA^{-p}|\right) = \text{log}\left(\prod_{n=0}^{d-1} |G_{n+p}|^{|K_{n,\partial A}}|\right)	= \sum_{n=0}^{d-1}|K_{n,\partial A}|\text{log}\left(|G_{n+p}|\right).
\end{align}
Now, using the isomorphism (\ref{eq:BrownIsomorphism}), we can write
\begin{align}
\text{log}(GSD_A) = \text{log}\left(|\text{H}^0(C_A, G)|\right)
= \sum_{n=0}^d\text{log}\left(|\text{H}^n(C_A, H_n(G))|\right),
\end{align}
and
\begin{align}
\sum_{p = 1}^d(-1)^p \text{log}\left(|\text{H}^{-p}(C_{\partial A}, G)|\right)
=  \sum_{n=0}^{d-1}\sum_{p = 1}^d(-1)^p \text{log}\left(|\text{H}^{n}(C_{\partial A}, H_{n+p}(G))|\right).
\end{align}
Therefore, the entanglement entropy $S_A$ can be written as
\begin{align}\label{eq:sab_satop}
	S_A = S_{\partial A} + S_{\text{Topo}},
\end{align}
where
\begin{equation}\label{eq:saba}
	S_{\partial A} = \sum_{n=0}^{d-1}\sum_{p = 1}^d (-1)^{p+1}|K_{n,\partial A}|\text{log}\left(|G_{n+p}|\right)
\end{equation}
is the "area law" term, that is, the term which explicitly depends only on the geometry of $\partial A$, and
\begin{align}\label{eq:satopo}
S_{\text{Topo}} = \sum_{n=0}^d\text{log}\left(|\text{H}^n(C_A, H_n(G))|\right) + \sum_{n=0}^{d-1}\sum_{p = 1}^d(-1)^p \text{log}\left(|\text{H}^{n}(C_{\partial A}, H_{n+p}(G))|\right)
\end{align}
is the topological entanglement entropy, i.e., the term that explicitly depends on the topology of both $A$ and $\partial A$. To calculate it, we must be able to compute cohomology groups with coefficients in the homology groups of the chain complex $(G,\partial^G)$. Cohomology with coefficients is related to the usual integral homology through the \textit{universal coefficient theorem} (see \cite{Hatcher} for a general reference), which states that, for any Abelian group $D$ and any $0\le n\le d$,

\begin{align}\label{eq:uni_coef_thm}
	\text{H}^n(C, D) \cong \text{Hom}(\text{H}_n(C), D) \oplus \text{Ext}^1(\text{H}_{n-1}(C), D),
\end{align}
where $\text{H}_n(C)$ is the homology group of order $n$ with integer coefficientes of the chain complex $C = (C,\partial^C)$. The $\text{Ext}^1$ term is related to the torsion part of $H_{n}(C)$, and it is trivial whenever $H_n(C)$ is free. Writting $H_n(C) \cong \mathbb{Z}^{\beta_n} \oplus T_i$, where $\mathbb{Z}^{\beta_n}$ corresponds to the free part of $H_n(C)$, $\beta_n$ being the Betti number of order $n$ of the chain complex $C$, and $T_i$ is its torsion part, we can see that indeed there is a relation between $S_{\text{Topo}}$ and the higher Betti numbers of both the region $A$ and its boundary $\partial A$. In particular, when we consider manifolds with torsion-free homology groups, the $\text{Ext}^1$ term in (\ref{eq:uni_coef_thm}) is trivial and we have as a result
\begin{align}
	|\text{H}^n(C, D)| = |\text{Hom}(\text{H}_n(C), D)| = |\text{Hom}(\mathbb{Z}^{\beta_n}, D)| = |D|^{\beta_n}.
\end{align}
However, the dependency on the Betti numbers of $A$ and $\partial A$ can be non-trivial when the homology groups have torsion. In this case we must compute the $\text{Ext}^1$ term. In section \ref{sec:Examples}, we illustrate our results by calculating the entanglement entropy of several examples coming from higher gauge theories, which include the familiar Quantum Double Models in their Abelian versions.

\section{Examples}\label{sec:Examples}
In this section we calculate the entanglement entropy of the model shown in \S \ref{sec:ExampleAHGT}. We will use this example to demonstrate how the topology of both $A$ and $\partial A$ affect the entanglement entropy. For $1$-gauge theories (Abelian Quantum Double Models), we recover well-known general results for any dimension $d$.

\subsection{1-Gauge Theories}\label{EE-TC}
We consider the general Abelian $1$-gauge theory, also known as the Abelian Quantum Double model, but now in any dimension $d$. This theory is defined by the chain complexes in figure \ref{fig:ChainMaps1Gauge}. This example can be seen as a particular case of the $d$-dimensional version of the model introduced in \ref{ex:12}, where the only non-trivial group is $G_1$.

\begin{figure}[h!]
	\begin{center}
		\begin{adjustbox}{max size={.7\textwidth}{.7\textheight}}
			\begin{tikzpicture}
			\matrix (m) [matrix of math nodes,row sep=2em,column sep=3em,minimum width=2em]
			{  0 & C_{d} & \cdots & C_{2} & C_{1} & C_{0} & 0 \\
				  &  &  & 0 & G_{1} & 0 &  \\};
			\path[-stealth]
			(m-1-1) edge node [above] {$$} (m-1-2)
			(m-1-2) edge node [above] {$\partial^C_{d} $} (m-1-3)
			(m-1-3) edge node [above] {$\partial^C_3 $} (m-1-4)
			(m-1-4) edge node [above] {$\partial^C_2$} (m-1-5)
			(m-1-5) edge node [above] {$\partial^C_1$} (m-1-6)
			edge node [right] {$f_1$} (m-2-5)
			(m-1-6) edge node [above] {$$} (m-1-7)  
			(m-2-4) edge node [below] {$ $} (m-2-5)
			(m-2-5) edge node [below] {$ $} (m-2-6);
			\path[dotted,->]   
			(m-1-4) edge node [left] {$\; m_{2}$} (m-2-5)
			(m-1-6) edge node [right] {$\; t_{0}$} (m-2-5);
			\end{tikzpicture}
		\end{adjustbox}
	\end{center}
	\caption{\label{fig:ChainMaps1Gauge} We obtain the chain complexes that define Abelian $1$-gauge theories in $d$ dimensions by setting $G_0=G_2=0$ in example \ref{ex:12}. The relevant maps are the configurations associated to the links $f_1$, the gauge transformations $t_0$, and the 1-holonomies $m_2$, as explained in \ref{ex:12-chain}.  }
\end{figure}

The geometrical chain complex, generated by the simplicial complex $K = \bigcup_{n=0}^dK_{n}$, corresponds to the $d$-dimensional manifold $M$ in which the theory is defined. Let's consider a bipartition of this manifold into two regions, $A$ and $B$, where $A$ is a $d$-dimensional closed immersed submanifold of $M$, as in the general case discussed in section \ref{sec:bipart}. This corresponds to splitting the simplicial complex $K$ as $K =\bigcup_{n=0}^{d} K_{n,A} \cup K_{n, B}$, where $\bigcup_{n=0}^d K_{n,A}$ is a subcomplex of $K$. Therefore, following the general procedure shown in section \ref{sec:bipart}, we can construct for region $A$ the chain complexes shown in figure \ref{fig:ChainMaps1GaugeA}.   

\begin{figure}[h!]
	\begin{center}
		\begin{adjustbox}{max size={.7\textwidth}{.7\textheight}}
			\begin{tikzpicture}
			\matrix (m) [matrix of math nodes,row sep=2em,column sep=3em,minimum width=2em]
			{  0 & C_{d,A} & \cdots & C_{2,A} & C_{1,A} & C_{0,A} & 0 \\
				 &  &  & 0 & G_{1} & 0 &  \\};
			\path[-stealth]
			(m-1-1) edge node [above] {$$} (m-1-2)
			(m-1-2) edge node [above] {$\partial^C_{d,A} $} (m-1-3)
			(m-1-3) edge node [above] {$\partial^C_{3,A} $} (m-1-4)
			(m-1-4) edge node [above] {$\partial^C_{2,A}$} (m-1-5)
			(m-1-5) edge node [above] {$\partial^C_{1,A}$} (m-1-6)
			edge node [right] {$f_1$} (m-2-5)
			(m-1-6) edge node [above] {$$} (m-1-7)   
			(m-2-4) edge node [below] {$ $} (m-2-5)
			(m-2-5) edge node [below] {$ $} (m-2-6);
			\path[dotted,->]    
			(m-1-4) edge node [left] {$\; m_{2}$} (m-2-5)
			(m-1-6) edge node [right] {$\; t_{0}$} (m-2-5);
			\end{tikzpicture}
		\end{adjustbox}
	\end{center}
	\caption{\label{fig:ChainMaps1GaugeA} Chain complexes that define Abelian $1$-gauge theories in $d$ dimensions for a subregion $A$, as explained in \ref{ex:12-chain}. The maps are the ones explained in \ref{fig:ChainMaps1Gauge} restricted to the subcomplex $K_A$.}
\end{figure}
From these chain complexes, as was demonstrated in sections \ref{sec:bipart}, \ref{sec:redrho} and \ref{sec:tee}, the following cochain complexes are well defined for this theory:
\begin{align}\label{eq:cochain1gaugeA}
\homA^{-d} \xrightarrow{\delta_A^{-d}} \cdots \xrightarrow{\delta_{A}^{-2}}  \homA^{-1} \xrightarrow{\delta_{A}^{0}} \homA^1 \to \cdots,
\end{align}
\begin{align}\label{eq:cochain1gaugebA}
\homdA^{-d} \xrightarrow{\delta_{\partial A}^{-d}} \cdots \xrightarrow{\delta_{\partial A}^{-2}}  \homdA^{-1} \xrightarrow{\delta_{\partial A}^{0}} \homdA^1 \to \cdots.
\end{align}
We can then apply equation (\ref{eq:sab_satop}) to find the entanglement entropy $S_A$ of this theory. First, let's compute $S_{\partial A}$, given by equation (\ref{eq:saba}). We have that
\begin{equation}\label{eq:sa_1gauge}
	S_{\partial A} = \sum_{n=0}^{d-1}\sum_{p=1}^d(-1)^{p+1}|K_{n,\partial A}|\text{log}(|G_{n+p}|).
\end{equation}
This quantity depends only on the simplicial complex of the boundary of $A$ and on the higher gauge groups of the theory in question. For $1$-gauge, there is only one non-trivial gauge group $G_1$, all other groups being equal to $\{0\}$, as is shown in figure \ref{fig:ChainMaps1GaugeA}. Therefore, since the order of trivial groups is equal to one, the only term which will survive in the double sum in equation (\ref{eq:sa_1gauge}) is the $n=0$, $p=1$ term, and we have as a result
\begin{equation}
	S_{\partial A} = |K_{0,\partial A}|\text{log}(|G_1|).
\end{equation} 
Thus, we have indeed an "area law" term, because $S_{\partial A}$ is proportional to the number of vertices in the boundary of $A$, which is essentially the size of the boundary. We are only left to calculate $S_{\text{Topo}}$. From equation (\ref{eq:satopo}), we know that
\begin{equation}\label{eq:satopo1gauge}
	S_{\text{Topo}} = \sum_{n=0}^d\text{log}\left(|\text{H}^n(C_A, H_n(G))|\right) + \sum_{n=0}^{d-1}\sum_{p = 1}^d(-1)^p \text{log}\left(|\text{H}^{n}(C_{\partial A}, H_{n+p}(G))|\right).
\end{equation}
We will leave the topology of $A$ and $\partial A$ unspecified for a moment and proceed as far as possible. For a 1-gauge theory, the non-trivial piece of the chain complex $(G, \partial^G)$ is given by 
\begin{align}
	0 \xrightarrow{\partial^G_2} G_1 \xrightarrow{\partial^G_1} 0,
\end{align} 
and the only non-trivial homology group related to this chain complex is \[\text{H}_1(G_1) = \text{ker}(\partial^G_1)/\text{Im}(\partial^G_2) =  G_1.\] Therefore, in the first term of (\ref{eq:satopo1gauge}) the only non-zero contribution to the sum comes from $n=1$, while for the second term of (\ref{eq:satopo1gauge}) the only non-zero contribution to the sum comes from $n=0, p=1$. Hence, the topological entanglement entropy in Abelian $1$-gauge theories is given by
\begin{equation}
	S_{\text{Topo}} = \text{log}\left(|\text{H}^1(C_A, G_1)|\right) - \text{log}\left(|\text{H}^{0}(C_{\partial A}, G_1)|\right).
\end{equation}
Then, to compute $S_{\text{Topo}}$ we need to calculate the order of the cohomology groups $\text{H}^1(C_A, G_1)$ and $\text{H}^{0}(C_{\partial A}, G_1)$, with coefficients in $G_1$. To do so, we employ the universal coefficient theorem for cohomology (\ref{eq:uni_coef_thm}), which states that
\begin{align}
	\text{H}^n(C, G_1) \cong \text{Hom}(\text{H}_n(C), G_1) \oplus \text{Ext}^1(\text{H}_{n-1}(C), G_1),
\end{align}
for every $0\le n \le d$, where $\text{H}_n(C)$ is the $n$th homology group with integer coefficients of the chain complex $C$. Here we only need the $n = 0$ and $n = 1$ cases. For $n = 0$, there is no $\text{Ext}$ term \cite{Hatcher}, so
\begin{equation}
	\text{H}^0(C_{\partial A}, G_1) \cong \text{Hom}(\text{H}_0(C_{\partial A}), G_1).
\end{equation}
We have also that $\text{H}_0(C_{\partial A}) \cong \mathbb{Z}^{\beta_0(\partial A)}$, where $\beta_0({\partial A})$ is the zeroth Betti number of $\partial A$. Therefore,
\begin{equation}
	\text{H}^0(C_{\partial A}, G_1) \cong \text{Hom}(\mathbb{Z}^{\beta_0({\partial A})}, G_1)
\end{equation}
and hence
\begin{equation}
	|\text{H}^0(C_{\partial A}, G_1)| = |\text{Hom}(\mathbb{Z}^{\beta_0({\partial A})}, G_1)| = |G_1|^{\beta_0({\partial A})}.
\end{equation}
Likewise, the $\text{Ext}$ term is trivial for the $n=1$ case, and we can write
\begin{equation}
	\text{H}^1(C_{A}, G_1) \cong \text{Hom}(\text{H}_1(C_{A}), G_1).
\end{equation}
So in general, the topological entanglement entropy for Abelian $1$-gauge theories reads
\begin{equation}
	S_{\text{Topo}} = \text{log}\left(|\text{Hom}(\text{H}_1(C_{A}), G_1)|\right) - \beta_{0}({\partial A})\text{log}\left(|G_1|\right),
\end{equation}
and it depends on the number of non-contractible curves one can draw over region $A$, as well as on the number of connected components of $\partial A$. To proceed further, we need to specify the topology of $A$. For example, if $A$ has the topology of a $d$-dimensional ball, its first homology group is trivial and hence we get that $S_{\text{Topo}} = - \beta_{0}({\partial A})\text{log}\left(|G_1|\right)$, i.e., it is only sensitive to the number of connected components of the entangling surface $\partial A$. If we take $A = T^d$, where $T^d$ is the $d$-torus $T^d = (S^1)^d$, we have that $\text{H}_1(C_{A}) \cong \mathbb{Z}^{\beta_1(A)}$, where $\beta_1(A)$ is the first Betti number of $A$ which, in this case, is equal to $\beta_1(A) = d$. Then $|\text{Hom}(\text{H}_1(C_{A}), G_1)| = |G_1|^{\beta_1(A)}$ and \[S_{\text{Topo}} = (\beta_1(A) - \beta_{0}(\partial A))\text{log}\left(|G_1|\right).\]

\par It is important to mention that we could also build theories with degrees of freedom attached to other components of the lattice, such as its $2$-simplices (plaquettes). Doing so, we have a particular case of a 2-gauge theory. As we will see in the following discussion, this makes the topological entanglement entropy depend on higher Betti numbers. An example of such model is the $4$-dimensional Toric Code \cite{dennis2002topological}. Note that in \cite{dennis2002topological} it is not called a 2-gauge theory. We do so in order to be consistent with our formalism. Let's then consider an Abelian $2$-gauge theory in $d$-dimensions with degrees of freedom living at the $2$-simplices of the lattice. The chain complexes describing this model is presented in figure \ref{fig:ChainMaps1Gaugefaces}

\begin{figure}[h!]
	\begin{center}
		\begin{adjustbox}{max size={.7\textwidth}{.7\textheight}}
			\begin{tikzpicture}
			\matrix (m) [matrix of math nodes,row sep=2em,column sep=3em,minimum width=2em]
			{  0 & C_{d} & \cdots &C_3& C_{2} & C_{1} & C_{0} & 0 \\
				&  &  & 0 & G_{2} & 0 &  &  \\};
			\path[-stealth]
			(m-1-1) edge node [above] {$$} (m-1-2)
			(m-1-2) edge node [above] {$\partial^C_{d} $} (m-1-3)
			(m-1-3) edge node [above] {$$} (m-1-4)
			(m-1-4) edge node [above] {$\partial^C_3 $} (m-1-5)
			(m-1-5) edge node [above] {$\partial^C_2$} (m-1-6)
			edge node [right] {$f_2$} (m-2-5)
			(m-1-6) edge node [above] {$\partial^C_1$} (m-1-7)
			(m-1-7) edge node [above] {$$} (m-1-8)  
			(m-2-5) edge node [below] {$ $} (m-2-6)
			(m-2-4) edge node [below] {$ $} (m-2-5);
			\path[dotted,->] 
			(m-1-4) edge node [left] {$\; m_3 $} (m-2-5)   
			(m-1-6) edge node [right] {$\; t_{1}$} (m-2-5);
			\end{tikzpicture}
		\end{adjustbox}
	\end{center}
	\caption{\label{fig:ChainMaps1Gaugefaces} Chain complexes that define Abelian $1$-gauge theories in $d$ dimensions with degrees of freedom living at the faces of the lattice. $f_2$ are the configurations, $t_1$ are the gauge transformations labeled by links, and there is also $m_3$  as the 2-holonomies.}
\end{figure}

We can repeat exactly the same steps we did for calculating the entanglement entropy of the $1$-gauge model obtained from the chain complexes in figure \ref{fig:ChainMaps1Gauge}. Hence, we can skip this discussion and right away apply the formula (\ref{eq:sab_satop}) to calculate the entanglement entropy of this model. Here, since the only non-trivial group is $G_2$, the only non-zero terms in the sum that define $S_{\partial A}$ are the $p=1, n=1$ and $p=2, n=0$ terms, so $S_{\partial A}$ is given by
\begin{equation}
	S_{\partial A} = |K_{1,\partial A}|\text{log}(|G_2|) - |K_{0,\partial A}|\text{log}(|G_{2}|) = \left(|K_{1,\partial A}| - |K_{0,\partial A}|\right)\text{log}(|G_2|).
\end{equation}   
Note that this term vanishes for a $3$-dimensional region $A$ with periodic boundary conditions. Now, since the only non-trivial piece of the group chain complex shown in figure (\ref{fig:ChainMaps1Gaugefaces}) is \[0 \xrightarrow{\partial^G_3} G_2 \xrightarrow{\partial^G_2} 0,\] the only non-trivial homology group associated to this complex is \[\text{H}_2(G) = \text{ker}(\partial^G_2)/\text{Im}(\partial^G_3) = G_2.\] Thus, the non-zero contribution to the first term of $S_{\text{Topo}}$ comes from the term with $n=2$, while the non-zero contributions to the second term of $S_{\text{Topo}}$ come from the terms with $p=1,n=1$ and $p=2, n=0$. Therefore, the topological entanglement entropy of this model is given by
\begin{align}\label{eq:1gauge_G2}
	S_{\text{Topo}} = \text{log}\left(|\text{H}^2(C_A, G_2)|\right) + \text{log}\left(|\text{H}^0(C_{\partial A}, G_2)|\right) - \text{log}\left(|\text{H}^1(C_{\partial A}, G_2)|\right).  
\end{align}
Again, using the universal coefficient theorem (\ref{eq:uni_coef_thm}) to calculate the cohomologies, we first note that there is no $\text{Ext}$ term for $n = 0$ and it is trivial for $n=1$. Therefore,
\begin{align}
	|\text{H}^0(C_{\partial A}, G_2)| = |\text{Hom}(H_0(C_{\partial A}), G_2)| = |G_2|^{\beta_0(\partial A)}, \label{eq:coh_dA1} \\
	|\text{H}^1(C_{\partial A}, G_2)| = |\text{Hom}(H_1(C_{\partial A}), G_2)|, \label{eq:coh_dA2}
\end{align} 
where we used again that $H_0(C_{\partial A}) \cong \mathbb{Z}^{\beta_0(\partial A)}$. However, applying the theorem to $|\text{H}^2(C_A, G_2)|$, we have that 
\begin{align}\label{eq:hom_ext}
	|\text{H}^2(C_A, G_2)| = |\text{Hom}(\text{H}_2(C_A), G_2)||\text{Ext}^1(\text{H}_{1}(C_A), G_2)|,
\end{align}
and we can't say much about $\text{Ext}^1(\text{H}_{1}(C_A), G_2)$ without knowing the homology group $\text{H}_{1}(C_A)$. For example, whenever $\text{H}_{1}(C_A)$ is free, wich happens when we, for instance, choose $A$ to be a $d$-dimensional ball or a $d$-torus, $\text{Ext}^1(\text{H}_{1}(C_A), G_2) = \{0\}$ and we have that \[|\text{H}^2(C_A, G_2)| = |\text{Hom}(\text{H}_2(C_A), G_2)|,\] and the topological entanglement entropy now is related to the number of non-contractible curves and surfaces one can draw over $\partial A$ and $A$, respectively, and therefore $S_{\text{Topo}}$ depends on higher Betti numbers of both $A$ and $\partial A$. 

To give an example where the $\text{Ext}$ term is non-trivial, take the $4$-dimensional manifold $A = \mathbb{R}P^3 \times [0,1]$, i.e., the product of the real projective space with the unit interval. It is a manifold whose boundary is $\partial A = \mathbb{R}P^3 \cup \mathbb{R}P^3$. The homology groups of $\mathbb{R}P^3$ are
\begin{align}
	\text{H}_n(\mathbb{R}P^3) = \begin{cases}
	\mathbb{Z}, &\mbox{if}\quad n = 0 \quad\mbox{or}\quad n = 3, \\
	\mathbb{Z}_2, &\mbox{if}\quad n = 1,\\
	0, &\mbox{otherwise}.
	\end{cases}
\end{align}
Let's compute the second homology group of $A$. By K\"{u}nneth's theorem \cite{maclane}, we have that
\begin{align}
	\text{H}_2(\mathbb{R}P^3 \times [0,1]) \cong \bigoplus_{i+j = 2}\text{H}_i(\mathbb{R}P^3) \otimes \text{H}_j([0,1]) \nonumber \\
	= \text{H}_0(\mathbb{R}P^3) \otimes \text{H}_2([0,1]) \oplus \text{H}_2(\mathbb{R}P^3) \otimes \text{H}_0([0,1]) \oplus \text{H}_1(\mathbb{R}P^3) \otimes \text{H}_1([0,1]) \nonumber \\
	\cong (\mathbb{Z} \otimes \{0\}) \oplus (\{0\} \otimes \mathbb{Z}) \oplus (\mathbb{Z}_2 \otimes \{0\}), \nonumber
\end{align} 
that is,
\begin{align}
	\text{H}_2(\mathbb{R}P^3 \times [0,1]) \cong \mathbb{Z} \oplus \mathbb{Z} \oplus \mathbb{Z}_2.
\end{align}
The first homology group of $A = \mathbb{R}P^3 \times [0,1]$ can be computed in the same way:
\begin{align}
	\text{H}_1(\mathbb{R}P^3 \times [0,1]) \cong \bigoplus_{i+j = 1}\text{H}_i(\mathbb{R}P^3) \otimes \text{H}_j([0,1]) \nonumber \\
	 = \text{H}_0(\mathbb{R}P^3) \otimes \text{H}_1([0,1]) \oplus \text{H}_1(\mathbb{R}P^3) \otimes \text{H}_0([0,1]) \nonumber \\
	 \cong (\mathbb{Z} \otimes \{0\}) \oplus (\mathbb{Z}_2 \otimes \Z), \nonumber
\end{align}
that is,
\begin{align}
	\text{H}_1(\mathbb{R}P^3 \times [0,1]) \cong \mathbb{Z}\oplus \mathbb{Z}_2 \otimes \Z.
\end{align}
Then, from equation \eqref{eq:hom_ext}, we have
\begin{align}
	|\text{H}^2(C_A, G_2)| = |\text{Hom}(\mathbb{Z} \oplus \mathbb{Z} \oplus \mathbb{Z}_2, G_2)||\text{Ext}^1(\mathbb{Z}\oplus \mathbb{Z}_2 \otimes \Z, G_2)| \nonumber \\
	= |G_2|^4|\text{Ext}^1(\mathbb{Z}\oplus \mathbb{Z}_2  \otimes \Z, G_2)|, \nonumber 
\end{align}
and $\Z_2\otimes \Z \cong \Z_2$ \cite{maclane} and that \[\text{Ext}^1(\mathbb{Z}\oplus \mathbb{Z}_2, G_2) \cong \text{Ext}^1(\mathbb{Z}, G_2) \oplus \text{Ext}^1(\mathbb{Z}_2, G_2) \cong \text{Ext}^1(\mathbb{Z}_2, G_2),\]
because $\text{Ext}^1(\mathbb{Z}, G_2) = \{0\}$ \cite{Hatcher}, we have
\begin{align}
	|\text{H}^2(C_A, G_2)| = |G_2|^4||G_2/2G_2|,
\end{align}
where we used that $\text{Ext}^1(\mathbb{Z}_n, G) \cong G/nG$, for any $n \in \mathbb{N}$ and any Abelian group $G$ \cite{Hatcher}. Now, the order of the cohomology groups \eqref{eq:coh_dA1} and \eqref{eq:coh_dA2} of the boundary $\partial A = \mathbb{R}P^3 \cup \mathbb{R}P^3$ are
\begin{align}
	|\text{H}^0(C_{\partial A}, G_2)| = |G_2|^2, \\
	|\text{H}^1(C_{\partial A}, G_2)| = |\text{Hom}(\mathbb{Z}_2 \oplus \mathbb{Z}_2, G_2)| = |G_2|^4,
\end{align}
and thus, the topological entanglement entropy is given by
\begin{align}
	S_{\text{Topo}} = \text{log}(|G_2|^4|G_2/2G_2|) + \text{log}(|G_2|^2) - \text{log}(|G_2|^4),
\end{align}
i.e.,
\begin{align}
	S_{\text{Topo}} = \text{log}(|G_2/2G_2|) + 2\text{log}(|G_2|).
\end{align}

Expanding the results found in the literature \cite{Grover11, zheng2018structure}, here we demonstrated that the topological entanglement entropy depends not only on the topology of the entangling surface $\partial A$, but also on the topological properties of the bulk region $A$. Moreover, the Betti numbers of $A$ and $\partial A$ are not the only information needed to obtain the topological entropy. For some models, it may depend also on \textit{torsion} properties of the sub-region $A$ and its boundary, captured mainly by the $\text{Ext}^1$ functor. 

\subsection{(2D) 0,1-Gauge Theories}\label{EE-01_Z4Z2}

Let's consider a $0,1$-gauge theory, a particular case of the one exhibited in \S\ref{ex:12-chain} with $G_2 = 0$. For simplicity, we focus in the $2$-dimensional case, but the discussion presented here can immediately be extended to any dimension $d$. The chain complexes defining this theory are shown in figure \ref{fig:ChainMapsr-0,1}.

\begin{figure}[h!]
	\begin{center}
		\begin{adjustbox}{max size={.7\textwidth}{.7\textheight}}
			\begin{tikzpicture}
			\matrix (m) [matrix of math nodes,row sep=2em,column sep=3em,minimum width=2em]
			{  0 & C_{2} & C_{1} & C_{0} & 0 \\
				& 0  & G_{1} & G_{0} & 0 \\};
			\path[-stealth]
			(m-1-1) edge node [above] {$ $} (m-1-2)
			(m-1-2) edge node [above] {$\partial^C_{2} $} (m-1-3)
			(m-1-3) edge node [above] {$\partial^C_{1}$} (m-1-4)
			edge node [left] {$f_1$} (m-2-3)
			(m-1-4) edge node [above] {$$} (m-1-5)       
			edge node [right] {$f_0$} (m-2-4)
			(m-2-2) edge node [below] {$$} (m-2-3)
			(m-2-3) edge node [below] {$\partial^G_{1} $} (m-2-4)
			(m-2-4) edge node [below] {$ $} (m-2-5);
			\path[dotted,->]     
			(m-1-2) edge node [left] {$\; m_{2}$} (m-2-3)
			(m-1-4) edge node [right] {$\; t_{0}$} (m-2-3);
			\end{tikzpicture}
		\end{adjustbox}
	\end{center}
	\caption{\label{fig:ChainMapsr-0,1} Chain complexes for the 0,1-gauge model in \ref{EE-01_Z4Z2}. The configurations for the links are determined by $f_1$, while for the vertices by $f_0$. The gauge transformations are given by $t_0$, while we have the 1-holonomies $m_2$ and 0-holonomies $m_1$ not shown in the figure.}
\end{figure}

Again, we divide the lattice into two regions, $A$ and $B$, where for $A$ we have a subcomplex $\bigcup_n^d K_{n,A}$. Following the general procedure shown in \S\ref{sec:bipart}, we can construct for region $A$ the chain complex $(C_A,\partial^C_A)$ and find that the entanglement entropy $S_A$ is given by equation (\ref{eq:sab_satop}). Let's calculate first $S_{\partial A}$. It is straightforward to see that the only non-zero contribution to the sum in (\ref{eq:saba}) is given by the $n = 0, p = 1$ term. So we have that
\begin{align}
	S_{\partial A} = |K_{0,\partial A}|\text{log}(|G_1|),
\end{align}
and it is again an "area law". Now, consider the $(G,\partial^G)$ chain complex of this theory:
\begin{align}
	0 \xrightarrow{\partial^G_2} G_1 \xrightarrow{\partial^G_1} G_0 \xrightarrow{\partial^G_0} 0.
\end{align} 
The non-trivial homology groups associated to this chain complex are
\begin{align}
	\text{H}_0(G) = \text{ker}(\partial^G_0)/\text{Im}(\partial^G_1), \label{eq:homol01_0} \\
	\text{H}_1(G) = \text{ker}(\partial^G_1)/\text{Im}(\partial^G_2). \label{eq:homol01_1}
\end{align}
So, the non-zero contributions to the topological entanglement entropy (\ref{eq:satopo}) are
\begin{align}
	S_{\text{Topo}} = \text{log}\left(|\text{H}^0(C_A, \text{H}_0(G))|\right) + \text{log}\left(|\text{H}^1(C_A, \text{H}_1(G))|\right) - \text{log}\left(|\text{H}^0(C_{\partial A}, \text{H}_1(G))|\right). 
\end{align}  
This result is true for any Abelian finite groups $G_0$ and $G_1$. However, now we have to specify these groups in order to calculate the homology groups (\ref{eq:homol01_0}) and (\ref{eq:homol01_1}). Thus, let's first consider the model with $G_0 = \mathbb{Z}_2=\{0,1\}$ and $G_1 = \mathbb{Z}_4=\{0,1,2,3\}$. The map $\partial_1^G: \mathbb{Z}_4 \to \mathbb{Z}_2$ is defined by $\partial^G_1(1) = 1$. The homology groups (\ref{eq:homol01_0}) and (\ref{eq:homol01_1}) are thus
\begin{align}
	\text{H}_0(G) = \text{ker}(\partial^G_0)/\text{Im}(\partial^G_1) = \mathbb{Z}_2/\mathbb{Z}_2 = \{0\}, \\
	\text{H}_1(G) = \text{ker}(\partial^G_1)/\text{Im}(\partial^G_2) = \mathbb{Z}_2.
\end{align}
The topological entanglement entropy of this model is thus
\begin{align}
	S_{\text{Topo}} = \text{log}\left(|\text{H}^1(C_A, \mathbb{Z}_2)|\right) - \text{log}\left(|\text{H}^0(C_{\partial A}, \mathbb{Z}_2)|\right).
\end{align}
Note that this result is equal to the $G_1 = \mathbb{Z}_2$ $1$-gauge theory. Again, this model exhibits the same topological properties as the Toric Code.
\par We can also consider the model where $G_0 = \mathbb{Z}_2$ and $G_1 = \mathbb{Z}_2$, with $\partial_1^G: \mathbb{Z}_2 \to \mathbb{Z}_2$ being the identity map. In this case, both $\text{H}_0(G)$ and $\text{H}_1(G)$ are equal to the trivial group, and the topological entanglement entropy is equal to zero, confirming the non-topological nature of this model.   

\subsection{(3D) 1,2-Gauge Theories}\label{EE-12}

This time we consider the case of the 1,2-gauge theory (\S\ref{ex:12} and \S\ref{ex:12-chain} with $G_0 = 0$). Although here we treat the $3$-dimensional case, the procedure below can right away be extended to arbitrary dimensions $d$. The chain complexes that define this theory are reproduced here in figure \ref{fig:ChainMaps-1,2r}.

\begin{figure}[h!]
	\begin{center}
		\begin{adjustbox}{max size={.7\textwidth}{.7\textheight}}
			\begin{tikzpicture}
			\matrix (m) [matrix of math nodes,row sep=2em,column sep=3em,minimum width=2em]
			{  0 & C_{3} & C_{2} & C_{1} & C_{0} & 0 \\
				& 0 & G_{2} & G_{1} & 0 &\\};
			\path[-stealth]
			(m-1-1) edge node [above] {$$} (m-1-2)
			(m-1-2) edge node [above] {$\partial^C_3 $} (m-1-3)
			(m-1-3) edge node [above] {$\partial^C_{2} $} (m-1-4)
			edge node [left] {$f_2$} (m-2-3)
			(m-1-4) edge node [above] {$\partial^C_{1}$} (m-1-5)
			edge node [right] {$f_1$} (m-2-4)
			(m-1-5) edge node [above] {$$} (m-1-6)          		
			(m-2-2) edge node [below] {$ $} (m-2-3)
			(m-2-3) edge node [below] {$\partial^G_{2} $} (m-2-4)
			(m-2-4) edge node [below] {$ $} (m-2-5);
			\path[dotted,->]   
			(m-1-2) edge node [left] {$\;m_{3}$} (m-2-3)  
			(m-1-4) edge node [right] {$\;t_{1}$} (m-2-3) 
			(m-1-5) edge node [right] {$\; t_{0}$} (m-2-4);
			\end{tikzpicture}
		\end{adjustbox}
	\end{center}
	\caption{\label{fig:ChainMaps-1,2r} Chain complexes for the 1,2-gauge model. The configurations maps are $f_2$ for faces, $f_1$ for links, the gauge transformations are the $t_0$, $t_1$; while the 2-holonomies are given by $m_3$, the 1-holonomies by $m_2$.}
\end{figure}

Once more, we divide the lattice into two regions, $A$ and $B$, where region $A$ is such that we have a subcomplex $K_{A}$. From the general procedure shown in \S\ref{sec:bipart}, we can construct for region $A$ the chain complex $(C_A,\partial^C_A)$ and then find that the entanglement entropy $S_A$ is given by equation (\ref{eq:sab_satop}). To compute it for this model, let's calculate first $S_{\partial A}$. We see that the only non-zero contributions to the sum in (\ref{eq:saba}) are the ones given by the $n = 0, p = 1$, $n=1, p=1$ and $n=0,p=2$ terms. Therefore
\begin{align}
	S_{\partial A} = |K_{0,\partial A}|\text{log}\left(|G_1|\right) + |K_{1,\partial A}|\text{log}\left(|G_2|\right) - |K_{0,\partial A}|\text{log}\left(|G_2|\right) \\
	= |K_{0,\partial A}|\text{log}\left({|G_1|}\right) + (|K_{1,\partial A}|-|K_{0,\partial A}|)\text{log}\left(|G_2|\right).
\end{align}
Note that, if we were dealing with a two dimensional system with periodic boundary conditions, the terms proportional to $\text{log}(|G_2|)$ would cancel out, as the number of links and vertices is the same, and we would have the same result as the one found in the $1$-gauge case.
\par Now, to calculate the topological entanglement entropy $S_{\text{Topo}}$, given by equation (\ref{eq:satopo}), first we consider the chain complex $(G,\partial^G)$ of this model:
\begin{align}
	0 \xrightarrow{\partial_3^G} G_2 \xrightarrow{\partial_2^G} G_1 \xrightarrow{\partial_1^G} 0.
\end{align}
The non-trivial homology groups associated to this chain complex are as follows:
\begin{align}
	\text{H}_1(G) = \text{ker}(\partial^G_1)/\text{Im}(\partial^G_2), \label{eq:homol_12gauge1} \\
	\text{H}_2(G) = \text{ker}(\partial^G_2)/\text{Im}(\partial^G_3). \label{eq:homol_12gauge2}
\end{align}
Thus, $S_{\text{Topo}}$ is given by
\begin{align}
	S_{\text{Topo}} = \text{log}\left(|\text{H}^1(C_A, \text{H}_1(G))|\right) + \text{log}\left(|\text{H}^2(C_A, \text{H}_2(G))|\right) - \text{log}\left(|\text{H}^0(C_{\partial A}, \text{H}_1(G))|\right) + \\
	- \; \text{log}\left(|\text{H}^1(C_{\partial A}, \text{H}_2(G))|\right) + \text{log}\left(|\text{H}^0(C_{\partial A}, \text{H}_2(G))|\right).
\end{align}
To proceed further, let's choose the model where $G_2 = \mathbb{Z}_4$, $G_1 = \mathbb{Z}_2$ and $\partial_2^G:\mathbb{Z}_4 \to \mathbb{Z}_2$ such that $\partial_2^G(1) = 1$. We can then calculate the homology groups (\ref{eq:homol_12gauge1}) and (\ref{eq:homol_12gauge2}) to be
\begin{align}
	\text{H}_1(G) = \{0\}, \\
	\text{H}_2(G) = \mathbb{Z}_2.
\end{align}
Therefore, the topological entanglement entropy of this model is
\begin{align}\label{eq:12Gauge_Stopo}
	S_{\text{Topo}} = \text{log}\left(|\text{H}^2(C_A, \mathbb{Z}_2)|\right) - \text{log}\left(|\text{H}^1(C_{\partial A}, \mathbb{Z}_2)|\right) + \text{log}\left(|\text{H}^0(C_{\partial A}, \mathbb{Z}_2)|\right).
\end{align}
This formula is equal to the $2$-gauge (\ref{eq:1gauge_G2}) case, with degrees of freedom living in the plaquettes of the lattice and $G_2 = \mathbb{Z}_2$. Note that we never used the fact that we are dealing with a $3$-dimensional system to derive equation (\ref{eq:12Gauge_Stopo}), which means that the same result holds for dimension $d\ge 3$. Then, if we for example take $d = 4$, this choice of groups for the $1,2$-gauge model imposes that its long-range entanglement characteristics, detected by the topological entanglement entropy, are the same as the $4D$ Toric Code one studied in \cite{dennis2002topological}. Other choices of groups may generate $1,2$-gauge models with more unusual behaviors.

\subsection{(4D) 1,2,3-Gauge Theories}\label{EE-123}

Now we consider the $1,2,3$-gauge theory in four dimensions. We do this to show that our formalism allows us to readily shift from a $3$-dimensional presented in \ref{ex:12} case to a 4D, our formalism can be extended to any arbitrary dimension $d$. The chain complexes that define this theory is shown in figure \ref{fig:ChainMaps-1,2,3r}.
\begin{figure}[h!]
	\begin{center}
		\begin{adjustbox}{max size={.7\textwidth}{.7\textheight}}
			\begin{tikzpicture}
			\matrix (m) [matrix of math nodes,row sep=2em,column sep=3em,minimum width=2em]
			{  0&C_4 & C_{3} & C_{2} & C_{1} & C_{0} & 0 \\
				&0 & G_3 & G_{2} & G_{1} & 0 &\\};
			\path[-stealth]
			(m-1-1) edge node [above] {$$} (m-1-2)
			(m-1-2) edge node [above] {$\partial^C_4 $} (m-1-3)
			(m-1-3) edge node [above] {$\partial^C_{3} $} (m-1-4)
			edge node [right] {$f_3$} (m-2-3)
			(m-1-4) edge node [above] {$\partial^C_{2}$} (m-1-5)
			edge node [right] {$f_2$} (m-2-4)
			(m-1-5) edge node [above] {$\partial^C_{1}$} (m-1-6)  
			edge node [right] {$f_1$} (m-2-5)     
			(m-1-6) edge node [above] {} (m-1-7)     		
			(m-2-2) edge node [below] {$\partial^G_3$} (m-2-3)
			(m-2-3) edge node [below] {$\partial^G_{2} $} (m-2-4)
			(m-2-4) edge node [below] {$\partial^G_1 $} (m-2-5)
			(m-2-5) edge node [below] {$ $} (m-2-6);
			\path[dotted,->]
			(m-1-2) edge node [left] {$\;m_{4}$} (m-2-3)    
			(m-1-6) edge node [right] {$\;t_{0}$} (m-2-5)    
			(m-1-4) edge node [right] {$\;t_{2}$} (m-2-3) 
			(m-1-5) edge node [right] {$\; t_{1}$} (m-2-4);
			\end{tikzpicture}
		\end{adjustbox}
	\end{center}
	\caption{\label{fig:ChainMaps-1,2,3r} Chain complexes for the 1,2,3-gauge model.}
\end{figure}

We again divide the lattice into two regions, $A$ and $B$, where region $A$ is such that we have a subcomplex $\bigcup_{n=0}^dK_{n,A}$. The general procedure shown in \S\ref{sec:bipart} allows us to construct for region $A$ the chain complex $(C_A,\partial^C_A)$ and thus we can find the entanglement entropy $S_A$ using equation (\ref{eq:sab_satop}). To calculate it for the $1,2,3$-gauge model, let's study first the term $S_{\partial A}$. We see that the only non-zero contributions to the sum in (\ref{eq:saba}) are the ones given by the $p = 1, n = 0,1,2$, $p=2, n = 0,1$ and $p=3,n=0$ terms. Therefore
\begin{align}
	S_{\partial A} = |K_{0,\partial A}|\text{log}\left(|G_1|\right) + |K_{1,\partial A}|\text{log}\left(|G_2|\right)  + |K_{2,\partial A}|\text{log}\left(|G_3|\right) +  \nonumber\\ 
	- \; |K_{0,\partial A}|\text{log}\left(|G_2|\right) - |K_{1,\partial A}|\text{log}\left(|G_3|\right) + |K_{0,\partial A}|\text{log}\left(|G_3|\right), \nonumber
\end{align}
that is,
\begin{align}
	S_{\partial A} = |K_{0,\partial A}|\text{log}\left(\frac{|G_1||G_3|}{|G_2|}\right) + |K_{1,\partial A}|\text{log}\left(\frac{|G_2|}{|G_3|}\right) + |K_{2,\partial A}|\text{log}\left(|G_3|\right). 
\end{align}
To calculate the topological entanglement entropy (\ref{eq:satopo}), we first consider the chain complex $(G,\partial^G)$ of the $1,2,3$-gauge model:
\begin{align}
	0 \xrightarrow{\partial_4^G} G_3 \xrightarrow{\partial_3^G} G_2 \xrightarrow{\partial_2^G} G_1 \xrightarrow{\partial_1^G} 0.
\end{align}
The non-trivial homology groups associated to it are the following ones:
\begin{align}
	\text{H}_1(G) = \text{ker}(\partial^G_1)/\text{Im}(\partial^G_2), \label{eq:homol_123gauge1} \\
	\text{H}_2(G) = \text{ker}(\partial^G_2)/\text{Im}(\partial^G_3), \label{eq:homol_123gauge2} \\
	\text{H}_3(G) = \text{ker}(\partial^G_3)/\text{Im}(\partial^G_4). \label{eq:homol_123gauge3}
\end{align}
Hence, $S_{\text{Topo}}$ is given by
\begin{align}
	S_{\text{Topo}} = \text{log}\left(|\text{H}^1(C_A, \text{H}_1(G))|\right) + \text{log}\left(|\text{H}^2(C_A, \text{H}_2(G))|\right) + \text{log}\left(|\text{H}^3(C_A, \text{H}_3(G))|\right) + \nonumber \\ -\;  \text{log}\left(|\text{H}^0(C_{\partial A}, \text{H}_1(G))|\right) - \text{log}\left(|\text{H}^1(C_{\partial A}, \text{H}_2(G))|\right) - \text{log}\left(|\text{H}^2(C_{\partial A}, \text{H}_3(G))|\right) + \nonumber \\ + \;
	\text{log}\left(|\text{H}^0(C_{\partial A}, \text{H}_2(G))|\right) + \text{log}\left(|\text{H}^1(C_{\partial A}, \text{H}_3(G))|\right) - \text{log}\left(|\text{H}^0(C_{\partial A}, \text{H}_3(G))|\right).
\end{align}
Note that the topological entanglement entropy does not change from the 3D to the 4D case, because the homological groups are the same
. 
To give a more concrete example, let's consider the case where $G_1 = G_2 = G_3 = \mathbb{Z}_4$ and the homomorphisms $\partial^G_3(1) = \partial^G_2(1) = 1$. The lattice is a discretization of a solid ball $S^3$. In this case, we can calculate the homology groups (\ref{eq:homol_123gauge1}), (\ref{eq:homol_123gauge2}) and (\ref{eq:homol_123gauge3}). They are
\begin{align}
	\text{H}_1(G) = \mathbb{Z}_4/\mathbb{Z}_2 \cong \mathbb{Z}_2, \\
	\text{H}_2(G) = \{0\}, \\
	\text{H}_3(G) = \mathbb{Z}_2.
\end{align}
Therefore, the topological entanglement entropy of this model is 
\begin{align}
	S_{\text{Topo}} = \text{log}\left(|\text{H}^1(C_A, \mathbb{Z}_2)|\right) + \text{log}\left(|\text{H}^3(C_A, \mathbb{Z}_2)|\right) - 2\text{log}\left(|\text{H}^0(C_{\partial A}, \mathbb{Z}_2)|\right) + \nonumber \\ - \; \text{log}\left(|\text{H}^2(C_{\partial A}, \mathbb{Z}_2)|\right) + \text{log}\left(|\text{H}^1(C_{\partial A}, \mathbb{Z}_2)|\right).
\end{align}
Let's use the universal coefficient theorem to compute these cohomology groups. We have that, as before,
\begin{align}
	|\text{H}^0(C_{\partial A}, \mathbb{Z}_2)| = |\text{Hom}(\text{H}_0(C_{\partial A}), \mathbb{Z}_2)| = 2^{\beta_0(\partial A)}, \\
	|\text{H}^1(C_{\partial A}, \mathbb{Z}_2)| = |\text{Hom}(\text{H}_1(C_{\partial A}), \mathbb{Z}_2)|, \\
	|\text{H}^1(C_A, \mathbb{Z}_2)| = |\text{Hom}(\text{H}_1(C_A), \mathbb{Z}_2)|.
\end{align}
Since in this case $\partial A = S^2$, we have that $\beta_0(\partial A) = 1$, $\text{H}_1(C_{\partial A}) = \text{H}_1(C_A) = \{0\}$. Therefore, $|\text{H}^1(C_{\partial A}, \mathbb{Z}_2)| = |\text{H}^1(C_A, \mathbb{Z}_2)| = 1$. Now, 
\begin{align}
	|\text{H}^2(C_{\partial A}, \mathbb{Z}_2)| = |\text{Hom}(\text{H}_2(C_{\partial A}), \mathbb{Z}_2)||\text{Ext}^1(\text{H}_{1}(C_{\partial A}), \mathbb{Z}_2)|, \\
	|\text{H}^3(C_{A}, \mathbb{Z}_2)| = |\text{Hom}(\text{H}_3(C_{A}), \mathbb{Z}_2)||\text{Ext}^1(\text{H}_{2}(C_{A}), \mathbb{Z}_2)|
\end{align}
and, since $\text{H}_1(C_{\partial A}) = \text{H}_2(C_A) = \{0\}$, the $\text{Ext}$ terms are trivial. So, with $\text{H}_2(C_{\partial A}) \cong \text{H}_3(C_{A}) \cong \mathbb{Z}$, we have that $|\text{H}^2(C_{\partial A}, \mathbb{Z}_2)| = |\text{H}^3(C_{A}, \mathbb{Z}_2)| = 2$. Hence, the topological entanglement entropy of this model is given by
\begin{align}
	S_{\text{Topo}} = -2\text{log}(2). 
\end{align}  
We see that, although we defined the model over a manifold with trivial topology and the ground state degeneracy of this model does not exhibit a topological dependency, the topological entanglement entropy is different from zero, indicating the presence of long-range entanglement.

\section{Conclusions}\label{sec:Remarks}
The paper carried out the calculation of the entanglement entropy for all Abelian higher gauge theories in a comprehensive way. Furthermore we could separate the entropy into the topological information and the geometrical one. We started by making a review of the kind of models we treated. Then we described them in very general terms, as introduced in \cite{higher}. The calculation followed from the definition of the density matrix $\rho$ as being proportional to the ground state projector, see \eqref{rho}. To obtain the reduced density matrix we considered a bipartition of the simplicial complex $K$ into a subcomplex $K_A$ and its complement. The partial trace over the unknown region was used to obtain the reduced density matrix $\rho_A$, which  included operators that were exclusively supported in $K_A$, see \eqref{eq:rhoA2}. From the Von Neumann entropy formula we derived the entanglement entropy and showed that it could be naturally interpreted as the ground state degeneracy of the same model but restricted to the subcomplex $K_A$, see \eqref{eq:SAmain}. In this sense, we mapped the problem of calculating the entanglement entropy of a higher gauge theory to a problem of counting the flat edge states of the theory restricted to region \(A\). Then, we further divided this restricted ground state degeneracy into two contributions, one comming from the bulk region $A$ and the other comming from its boundary $\partial A$, and we showed that this splitting allows us to write the entanglement entropy as a sum of two terms \eqref{eq:sab_satop}: one being the \textit{area law}, i.e., a term depending only on the geometry of the entangling surface $\partial A$, and the other being the \textit{topological entanglement entropy}, a term depending on the topological properties of both $A$ and $\partial A$. 

We demonstrated a formula for the topological entanglement entropy $S_{\text{Topo}}$ in terms of the cohomology groups with coefficients in the homology groups of the complex \eqref{eq:Gchain2}. The universal coefficient theorem can be applied to give a formula for $S_{\text{Topo}}$ in terms of the integral homology groups of the manifold in question, which in turn can be used to express the topological entropy in terms of the Betti numbers of the underlying space and its boundary. However, our equations show that, even in regular $1$-gauge theories (Abelian Quantum Double models), $S_{\text{Topo}}$ can depend on torsion properties of the manifold.

\appendix

\section{Trace of Local Operators}\label{sec:app2}
In this appendix we show how taking the partial trace of the ground state projector, or any product of projection operators of the theory, implies in Eq.(\ref{eq:rhoA1}). 

We begin by writing the density matrix, \(\rho\), using the local decomposition of \(\mathcal{A}_0\) and \(\mathcal{B}_0\) (see \cite{higher} for a detailed account on this). The local decomposition yields
\begin{align*}
\mathcal{A}_0 = \prod_{n=0}^{d} \prod_{x\in K_n} A_{n,x}, \; \text{and}\quad \mathcal{B}_0 = \prod_{n=0}^{d} \prod_{x\in K_n} B_{n,x},
\end{align*}
such that the density matrix of Eq.(\ref{eq:rho}) can be written as
\begin{align*}
\rho = \dfrac{1}{GSD} \left(\prod_{n=0}^{d} \prod_{x\in K_n} A_{x}\right)\left(\prod_{n=0}^{d} \prod_{x\in K_n} B_{x}\right).
\end{align*}
This form is convenient for taking the partial trace as the operators are now labelled by simplices $x \in K_n$ for $0\leq n \leq d$. This allows the identification of the operators that act exclusively on region $A$ from the operators that act on both $\partial A$ and $B$, in order to get the terms that survive the partial trace.
Therefore, the reduced density matrix is written as
\begin{align}\label{eq:rhoAlocal}
\rho_A = \text{Tr}_B(\rho)= \;\text{Tr}_B\left(\prod_n \prod_{x \in K_n}A_{n,x}\prod_{y \in K_n}B_{n,y}\right).
\end{align}
Before proceeding with the calculation of the above partial trace, we will introduce a property that will let us evaluate the partial trace rather straightforwardly.

\begin{prop}\label{prop:traceless}
Let \(x,y \in K_n\), be $n$-simplices for \(0\leq n\leq d\). The local operators, \(A_{n,x}, B_{n,y}:\mathcal{H}\rightarrow\mathcal{H}\), are traceless unless they act  trivially (as the identity operator \(\mathbb{1}_\mathcal{H}\)).
\end{prop}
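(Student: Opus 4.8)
The plan is to prove the two families of operators traceless separately, working throughout in the orthonormal basis $\{\ket{f}\}_{f\in\hom^0}$, in which $\text{Tr}(\mathcal{O})=\sum_{f\in\hom^0}\bra{f}\mathcal{O}\ket{f}$ and in which each $A_t$ acts as a permutation while each $B_m$ acts diagonally. Here I treat the elementary operators $A_t$ (with $t\in\hom^{-1}$) and $B_m$ (with $m\in\hom_1$) that are the summands of the localized projectors and that actually enter the partial trace in \eqref{eq:red-pretrace}, matching the usage in the main text; the key structural fact I exploit is that $\hom^0$ is a finite abelian group.

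For the gauge operators, \eqref{eq:At} gives $A_t\ket{f}=\ket{f+\delta^{-1}t}$, so $A_t$ merely relabels basis vectors and $\bra{f}A_t\ket{f}=\delta(f,f+\delta^{-1}t)$. This matrix element vanishes for every $f$ unless $\delta^{-1}t=0$, whence $\text{Tr}(A_t)=|\hom^0|$ if $\delta^{-1}t=0$ and $\text{Tr}(A_t)=0$ otherwise. In the exceptional case $\delta^{-1}t=0$ the shift is trivial, i.e. $A_t=\mathbb{1}_{\mathcal H}$, so $A_t$ is traceless precisely when it is not the identity. Specializing $t=e[n,x,g]$ settles the statement on the gauge side.

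For the holonomy operators, \eqref{eq:Bm} gives $B_m\ket{f}=\langle m,\delta^0 f\rangle\ket{f}$, so $\bra{f}B_m\ket{f}=\langle m,\delta^0 f\rangle$ and $\text{Tr}(B_m)=\sum_{f\in\hom^0}\langle m,\delta^0 f\rangle$. Using the adjunction \eqref{eq:deltaHomChain} I rewrite the summand as $\langle m,\delta^0 f\rangle=\langle\delta_1 m,f\rangle$, and since $\delta^0$ is a group homomorphism and the pairing is multiplicative in its second argument, the map $\chi_m\colon f\mapsto\langle\delta_1 m,f\rangle$ is a one-dimensional unitary representation (character) of $\hom^0$. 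The computation of the trace is then exactly the sum of this character over the finite abelian group $\hom^0$.

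The main content --- and the only step that is more than bookkeeping --- is this $B_m$ case, where I invoke the orthogonality relation for characters: the sum of a nontrivial character over a finite abelian group is zero, whereas a trivial character sums to the order of the group. Thus $\text{Tr}(B_m)=0$ unless $\chi_m$ is trivial, and triviality of $\chi_m$ means $\langle m,\delta^0 f\rangle=1$ for every $f$, which is exactly the statement $B_m\ket{f}=\ket{f}$ for all basis vectors, i.e. $B_m=\mathbb{1}_{\mathcal H}$. Specializing $m=\hat{e}[n,y,r]$ then completes the proof. The pitfall to check carefully is precisely this last equivalence: one must verify that a vanishing trace is forced by nontriviality of the character, and not merely by accidental cancellations, which is why the representation-theoretic orthogonality relation, rather than a direct estimate, is the correct tool.
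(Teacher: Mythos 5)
Your proposal is correct and follows essentially the same strategy as the paper's own proof: for the gauge operators, the observation that a shift operator $A_t\ket{f}=\ket{f+\delta^{-1}t}$ has vanishing diagonal matrix elements unless $\delta^{-1}t=0$, and for the holonomy operators, orthogonality of characters of a finite abelian group. The only (cosmetic, and arguably cleaner) difference is that you prove tracelessness directly for the elementary operators $A_t$, $B_m$ and then specialize to $t=e[n,x,g]$, $m=\hat{e}[n,y,r]$, identifying the relevant character of $\hom^0$ via the adjunction $\langle m,\delta^0 f\rangle=\langle\delta_1 m,f\rangle$, whereas the paper expands the averaged projectors $A_{n,x}$, $B_{n,y}$ into these same summands and applies the two orthogonality arguments term by term.
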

\begin{proof}
Let \(\{\ket{f}\}\) be a basis of $\mathcal{H}$, with \(f \in \hom^0\). We start by taking the trace of the local operator $A_{n,x}$:
\begin{align*}
\text{Tr}\left(A_{n,x}\right) = & \sum_{f} \bra{f} A_{n,x} \ket{f} = \dfrac{1}{\left|G_{n+1}\right|}\sum_{f}\, \sum_{g \in G_{n+1}} \bra{f} A_{e[n,x,g]} \ket{f}.
\end{align*}
From \eqref{eq:At}, the action of \(A_{n,x}\) on a basis state consists in general on a shift of basis elements, which yields
\begin{align*}
\text{Tr}\left(A_{n,x}\right) =&  \dfrac{1}{\left|G_{n+1}\right|}\sum_{f}\, \sum_{g \in G_{n+1}} \braket{f|f+ \delta^{-1}(e[n,x,g])|f}.
\end{align*}
From the last expression, by using the orthogonality of the basis, it is clear that the only non-null term in the sum occurs only when \(g=e \in G_{n+1}\), the identity element. Thus, we have:
\begin{align*}
\text{Tr}\left(A_{n,x}\right) =& \dfrac{\text{Tr}\left(\mathbb{1}\right)}{\left|G_{n+1}\right|} = \dfrac{\dim(\mathcal{H})}{\left|G_{n+1}\right|}.
\end{align*}
 Similarly, for the trace of local holonomy measurement operators, \(B_{n,y}\), we have:
\begin{align*}
\text{Tr}\left(B_{n,y}\right) = & \sum_{f} \braket{f| B_{x}|f} = \dfrac{1}{\left|G_{n-1}\right|}\sum_{f}\, \sum_{r \in \hat{G}_{n-1}} \bra{f} B_{\hat{e}[n,y,r]} \ket{f}. 
\end{align*}
Using \eqref{eq:Bm} the above expression can be written as:
\begin{align*}
\text{Tr}\left(B_{y}\right) =\;  \dfrac{1}{\left|G_{n-1}\right|}\sum_{f}\, \sum_{r \in \hat{G}_{n-1}} \langle r, \delta^0 f_n(y)\rangle\braket{f|f}\;
=\;  \dfrac{1}{\left|G_{n-1}\right|}\sum_{f}\, \sum_{r \in \hat{G}_{n-1}} \langle r, \delta^0 f_n(y)\rangle\langle \hat{e}, \delta^0 f_n(y)\rangle\braket{f|f},
\end{align*}
where in the last line we used the fact that \(\langle \hat{e}, g\rangle=1, \, \forall g \in G_{n-1}\) and \(\hat{e} \in \hat{G}_{n-1}\), the trivial representation. From the orthogonality relations of characters \cite{hall, barut,serre, james}, we note that:
\begin{align*}
\sum_{f}\langle r, \delta^0 f_n(y)\rangle\langle \hat{e}, \delta^0 f_n(y)\rangle = \delta(e,f_n(y)), 
\end{align*}
which implies that the trivial representation term is the only one that has non-zero trace, since it acts as the identity operator.
\begin{align*}
\text{Tr}\left(B_{y}\right) = & \dfrac{\left|\mathcal{H}\right|}{\left|G_{n-1}\right|}
\end{align*}
\end{proof}
 This result can naturally be extended to products of such operators to show that the only term that survives the trace is the one that acts trivially on region $B$. This allows us to express  the reduced density matrix, \(\rho_A\) of Eq.(\ref{eq:rhoAlocal}) in terms of operators that act only in region \(A\). 
 
 In this case,  Proposition \ref{prop:traceless} implies that any operator (or product of several) that is different from \(\mathbb{1}_B\), the identity operator in \(\mathcal{H}_B\), will have vanishing trace. In particular, local gauge transformations \(A_{x}\) will survive the trace if and only if \(x \in K_{n,\tilde{A}}\), where \(\tilde{A}\) is the interior of \(A\) \footnote{Local gauge transformations are labeled by simplices \(x \in K_n\) and they act on the gauge fields at the co-boundary, \(\partial^{\ast}(x)\). In particular, gauge transformations located at \(x \in K_{n,{\partial A}}\), the boundary of \(A\), also act on \(B\). Thus, they do not contribute to the trace.} as in Def. \ref{def:intA}. On the other hand, local holonomy measurement operators \(B_{y}\) will survive the trace if and only if \(y \in K_{n,A}\) which corresponds to the entire region $A$. Consequently, the reduced density matrix is:
\begin{align*}
\rho_A =\text{Tr}_B(\mathbb{1}_B) \prod_{n} \prod_{x \in K_{n,\tilde{A}}}A_{x}\prod_{y \in K_{n,A}}B_{y} .
\end{align*}  
From which we write Eq. (\ref{eq:rhoA1}).

\section{Auxiliary Isomorphism}\label{sec:app3}
In this appendix, we prove the equality \(\left|\text{Im}(\delta^0)\right|=\left|\text{Im}(\delta_{1})\right|\) that allowed us to relate the dimension of the Hilbert space $\mathcal{H}$ and the ground state degeneracy \(GSD\) through: 
\[GSD \,|\text{Im}(\delta^{-1})||\text{Im}(\delta_{1})| =\dim(\mathcal{H})= \dim(\mathcal{H}_A) \dim(\mathcal{H}_B).\]
In order to do so, we will show that there is a well defined bijection between \(\text{ker}(\delta_1)\) and \(\hom^1 / \text{Im}(\delta^0)\) from which the result follows.

Let \(A, B\) be two finite Abelian groups and \(\phi:A \rightarrow B\) a homomorphism between them. Consider also \(\hat{A}=\Hom (A, U(1))\) and \(\hat{B}=\Hom (B, U(1))\) their corresponding unitary irreducible representations, let \(\hat{\phi}:\hat{B}\rightarrow \hat{A}\) be the homomorphism between representations induced by \(\phi\) via
\[\hat{\phi}(\beta) := \beta \circ \phi, \]
where $\beta \in \hat{B}$ is an irrep of $B$. 

\begin{prop}\label{prop:Iso}
The subgroups \(\text{ker}\ \hat{\phi}\) and \(\frac{B}{\text{Im}(\phi)}\) are isomorphic.
\end{prop}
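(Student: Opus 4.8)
The plan is to recognize $\ker\hat\phi$ as the annihilator of $\text{Im}(\phi)$ inside the character group $\hat B$, then to identify this annihilator with the character group of the quotient $B/\text{Im}(\phi)$, and finally to invoke the self-duality of finite Abelian groups to replace that character group by $B/\text{Im}(\phi)$ itself.

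First I would unwind the definition of $\hat\phi$. A character $\beta\in\hat B$ lies in $\ker\hat\phi$ precisely when $\hat\phi(\beta)=\beta\circ\phi$ is the trivial character of $A$, that is, when $\beta(\phi(a))=1$ for every $a\in A$. Equivalently, $\beta$ is trivial on the subgroup $\text{Im}(\phi)\le B$. Thus $\ker\hat\phi=\text{Im}(\phi)^{\perp}:=\{\beta\in\hat B:\ \beta|_{\text{Im}(\phi)}=1\}$, the annihilator of $\text{Im}(\phi)$ in $\hat B$.

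Next I would establish the natural isomorphism $\text{Im}(\phi)^{\perp}\cong\widehat{B/\text{Im}(\phi)}$. By the universal property of the quotient, a character $\beta$ that is trivial on $\text{Im}(\phi)$ factors uniquely as $\beta=\bar\beta\circ\pi$, where $\pi:B\to B/\text{Im}(\phi)$ is the canonical projection and $\bar\beta:B/\text{Im}(\phi)\to U(1)$ is a character of the quotient. The assignment $\beta\mapsto\bar\beta$ is a group homomorphism, and it is a bijection: it is injective because $\pi$ is surjective (so $\bar\beta$ is uniquely determined by $\beta$), and it is surjective because every character of $B/\text{Im}(\phi)$ pulls back along $\pi$ to a character of $B$ that is trivial on $\text{Im}(\phi)$. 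Hence $\ker\hat\phi\cong\widehat{B/\text{Im}(\phi)}$.

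Finally, since $B/\text{Im}(\phi)$ is a finite Abelian group, it is (non-canonically) isomorphic to its own character group, so $\widehat{B/\text{Im}(\phi)}\cong B/\text{Im}(\phi)$; composing the two isomorphisms gives $\ker\hat\phi\cong B/\text{Im}(\phi)$, as claimed. The only genuinely non-canonical ingredient, and the step I would state most carefully, is this last self-duality $\hat G\cong G$ for finite Abelian $G$: it rests on the structure theorem, for which it suffices to verify $\widehat{\mathbb{Z}_n}\cong\mathbb{Z}_n$ for cyclic groups and then pass to finite direct sums. Every other step above is canonical, so no real analytic difficulty arises; the work is purely in tracking the two identifications correctly.
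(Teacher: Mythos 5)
Your proof is correct, and its core is the same as the paper's: you identify \(\text{ker}\,\hat{\phi}\) with the characters of \(B\) that annihilate \(\text{Im}(\phi)\), and then factor these through the canonical projection \(\pi: B \to B/\text{Im}(\phi)\) — exactly the commuting-diagram argument the paper uses to build the mutually inverse maps \(\beta \mapsto \beta'\) and \(\beta' \mapsto \beta' \circ \pi\). Where you differ is in the final step, and here your write-up is actually the more complete one. The paper's proof terminates at the isomorphism \(\text{ker}\,\hat{\phi} \cong \text{Hom}\bigl(B/\text{Im}(\phi), U(1)\bigr)\), i.e.\ with the \emph{character group} of the quotient rather than the quotient itself; the identification of \(\widehat{B/\text{Im}(\phi)}\) with \(B/\text{Im}(\phi)\), which is what the proposition literally asserts, is never carried out there. (In the sequel the paper only ever uses the equality of orders \(|\text{ker}\,\hat{\phi}| = |B|/|\text{Im}(\phi)|\), for which knowing \(|\hat{G}| = |G|\) suffices.) Your explicit appeal to the non-canonical self-duality \(\hat{G} \cong G\) for finite Abelian groups, justified via the structure theorem by checking cyclic groups and passing to direct sums, supplies precisely this missing link, so your proof establishes the statement as phrased rather than only the dual version of it.
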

\begin{proof}
We will split the proof in two parts, in the first half of the proof we show that there is a well defined map between  \(\text{ker}\ \hat{\phi}\) and \(\frac{B}{\text{Im}(\phi)}\) and then we show that its inverse is also well defined, which turns the maps into a bijection.
\begin{enumerate}
\item Note that an irreducible representation $\beta\in \text{ker}\hat{\phi}$ if and only if \(\text{Im}\phi \subset \text{ker}\beta\). This allows us to construct the following commuting diagram:
\begin{equation}\label{com-diag}
\xymatrix{
B\ar[d]_\pi\ar[r]^\beta&U(1)\\
\frac{B}{\im\phi}\ar@{-->}[ur]_{\beta'}
}\end{equation}
where \(\pi: B \rightarrow \frac{B}{\text{Im}\phi}\) is the canonical projection sending \(b\in B \)  into its corresponding equivalence class \([b] \in \frac{B}{\text{Im}\phi} \)
Furthermore, \(\beta^\prime \in \Hom(\frac{B}{\text{Im}\phi}, U(1))\) is unique and defined as: 
\[\beta^{\prime}([b]):= \beta(b) \]
notice that \(\beta^\prime\) is well defined within equivalence classes since \(\text{Im}\phi \subset \text{ker}\beta\). To see this, consider \(b^\prime \neq b \in [b]\), this means that \(b-b^{\prime} \in \text{Im}\phi \subset \text{ker}\beta\), therefore:
\begin{align*}
\beta(b-b^\prime) = 1, \quad & \Rightarrow   \beta(b)\beta(b^\prime)^{-1}= 1, \\
& \Rightarrow \beta(b) = \beta(b^\prime)= \beta^{\prime}([b]).
\end{align*}
This is, we have shown that given an irrep \(\beta \in \text{ker} \hat{\phi}\) then there is a unique morphism \(\beta^{\prime} \in \Hom(\frac{B}{\text{Im}\phi}, U(1))\).

We now need to show that the converse also holds, to this intent, consider \(\beta^{\prime}:\frac{B}{\text{Im}\phi} \rightarrow U(1)\). Recall that \(\text{Im}\phi \subset \text{ker}\beta\). Observe also that \(\beta\) is the only map for which the diagram in \ref{com-diag} commutes.

Thus, we have shown that given a $\beta^{\prime} \in \Hom(\frac{B}{\text{Im}\phi}, U(1))$ there is a unique $\beta= \beta^{\prime}\circ \pi  \in \text{ker}\hat{\phi}$.

\item Now we carry on showing that the map above is in fact a bijection and it defines an isomorphism. Let \(\iota\) be the map:
\begin{align*}
\iota: \text{ker}\hat{\phi} & \longrightarrow \Hom\left(\frac{B}{\text{Im}\phi}, U(1)\right), \\
\beta\quad  &\,\,\mapsto \quad \beta^{\prime},
\end{align*}
where \(\beta^{\prime}([b]):= \beta(b)\). 
Let now, \(\kappa\), be the map:
\begin{align*}
\kappa:\Hom\left(\frac{B}{\text{Im}\phi}, U(1)\right)  & \longrightarrow \text{ker}\hat{\phi} , \\
\beta^{\prime}\quad  &\,\,\mapsto \quad \beta:=\beta^{\prime}\circ \pi,
\end{align*}
where \(\pi: B \rightarrow \frac{B}{\text{Im}\phi}\) is the canonical projection that sends \(b \in B\) into its corresponding equivalence class \([b] \in \frac{B}{\text{Im}\phi} \). Notice that \(\kappa = \iota^{-1}\), since:
\begin{align*}
\left(\kappa \circ \iota\right)(\beta)(b) = \kappa(\beta^\prime)(b) 
										  = (\beta^\prime \circ \pi)(b)
										 = \beta^\prime ([b])
									     = \beta(b).
\end{align*}
Therefore, the map \(\iota\) is a bijection. To prove that it defines an isomorphism we only need to check for its compatibility with the group operation in \( \ker \hat{\phi}\). This is, given \(\beta_1, \beta_2 \in \ker \hat{\phi}\), we want to show that \(\iota(\beta_1\cdot\beta_2)=\iota(\beta_1)\cdot\iota(\beta_2)\). 

So, consider \(b \in B\) and , \([b]\in \frac{B}{\text{Im}\phi} \):
\begin{align*}
\iota(\beta_1\cdot\beta_2)([b]) = (\beta_1\cdot\beta_2)^{\prime}([b]) = (\beta_1\cdot\beta_2)(b) = \beta_1(b)\cdot\beta_2(b) = \iota(\beta_1)\cdot\iota(\beta_2).
\end{align*}
Hence, \(\text{ker}\hat{\phi} \simeq \Hom\left(\frac{B}{\text{Im}\phi}, U(1)\right)\).
\end{enumerate}
\end{proof}
In particular, as a result of the above proposition, it is true that, for \(A, B\) finite groups:
\begin{equation}\label{eq:Iso}
\left|\text{ker}\hat{\phi}\right| = \left|\Hom\left(\frac{B}{\text{Im}\phi}, U(1)\right)   \right| = \dfrac{\left|B \right|}{\left|\text{Im} \phi \right|},
\end{equation}
where in the last step we used the fact that all groups are Abelian. We are one step away from our goal which can be stated as the following proposition

\begin{prop}
Let \(\phi:A \rightarrow B\) be a homomorphism between finite Abelian groups. Moreover, let \(\hat{\phi}: \hat{B} \rightarrow \hat{A}\) its dual morphism. Then, 
\[\left|\text{Im}\,\phi\right| = \left|\text{Im}\,\hat{\phi}\right|.\]
\end{prop}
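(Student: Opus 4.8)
The plan is to deduce the statement directly from the machinery already assembled, namely the first isomorphism theorem together with the counting identity \eqref{eq:Iso} that follows from Proposition \ref{prop:Iso}. The key observation is that Proposition \ref{prop:Iso} has already done all of the conceptual work by identifying $\text{ker}\,\hat{\phi}$ with $B/\text{Im}\,\phi$; what remains is a short order count.

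First I would apply the first isomorphism theorem to the dual morphism $\hat{\phi}:\hat{B}\to\hat{A}$. This gives $\hat{B}/\text{ker}\,\hat{\phi}\cong \text{Im}\,\hat{\phi}$, and since all groups in sight are finite, taking orders yields $|\text{Im}\,\hat{\phi}| = |\hat{B}|/|\text{ker}\,\hat{\phi}|$. Next I would use the standard fact from the character theory of finite Abelian groups that $|\hat{B}| = |B|$, which holds because a finite Abelian group is (non-canonically) isomorphic to its group of irreducible representations. Combining these two facts reduces the problem to computing $|\text{ker}\,\hat{\phi}|$.

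Finally I would invoke equation \eqref{eq:Iso}, which asserts $|\text{ker}\,\hat{\phi}| = |B|/|\text{Im}\,\phi|$. Substituting this into the expression above gives
\[
|\text{Im}\,\hat{\phi}| = \frac{|\hat{B}|}{|\text{ker}\,\hat{\phi}|} = \frac{|B|}{\,|B|/|\text{Im}\,\phi|\,} = |\text{Im}\,\phi|,
\]
which is precisely the claim.

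Since the essential isomorphism was established in Proposition \ref{prop:Iso} and distilled into \eqref{eq:Iso}, this argument is almost entirely bookkeeping, and I do not anticipate a genuine obstacle. The only point requiring a word of justification is the equality $|\hat{B}| = |B|$; I would simply cite the standard structure/character theory of finite Abelian groups (as already referenced earlier in the appendix) rather than reprove it, so that the proof collapses to the one-line computation displayed above.
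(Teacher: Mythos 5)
Your proof is correct and follows exactly the same route as the paper's: the first isomorphism theorem applied to \(\hat{\phi}\), the equality \(|\hat{B}|=|B|\) for finite Abelian groups, and the counting identity \(|\text{ker}\,\hat{\phi}| = |B|/|\text{Im}\,\phi|\) from Proposition \ref{prop:Iso}. There is nothing to add; the argument is complete.
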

\begin{proof}
From Prop. \ref{prop:Iso}, we know that: \(\left|\text{ker}\hat{\phi}\right| = \dfrac{\left|B \right|}{\left|\text{Im} \phi \right|}\).
Now, applying the First Isomorphism Theorem \cite{basicalg} on \(\hat{\phi}: \hat{B} \rightarrow \hat{A}\), we know that: \(\hat{B}/\text{ker}\hat{\phi} \simeq \text{Im}\hat{\phi}\), from which we can write:
\begin{align*}
\dfrac{|\hat{B}|}{|\text{ker}\hat{\phi}|} = \left|\text{Im}\, \hat{\phi} \right|,
\end{align*}
recall that \(|\hat{B}|= |B|\) since we are dealing with Abelian groups. Replacing Eq. (\ref{eq:Iso}) into the above one, we get:
\begin{align*}
\left|\text{Im}\,\phi\right| =  |\text{Im}\, \hat{\phi} |.
\end{align*}
\end{proof} 

\section{Proof of lemma \ref{eq:lem:1}}\label{app_lemma}
We start with a proposition:
\begin{prop}
	\begin{equation} 
	|\text{ker}(\delta_{A}^{-1})| = |\text{ker}(\delta^{-1}_{\tilde{A}})||\text{ker}(\delta^{-1}_{\partial A})|.
	\end{equation}
\end{prop}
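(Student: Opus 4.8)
The plan is to show that the entire construction splits as a direct sum indexed by the interior/boundary partition, so that $\delta_A^{-1}$ becomes $\delta_{\tilde{A}}^{-1}\oplus\delta_{\partial A}^{-1}$ and its kernel splits correspondingly; the stated cardinality identity is then immediate.

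First I would record the combinatorial fact on which everything rests. Any face $x$ of a simplex $y$ satisfies $x\subseteq y$, hence $x\cap S\subseteq y\cap S$ for every set $S$. Taking $S=\partial A$ shows that a face of an interior simplex is again interior, and taking $S=\tilde{A}$ shows that a face of a boundary simplex is again a boundary simplex. Therefore the restricted boundary operator sends $C_{n,\tilde{A}}$ into $C_{n-1,\tilde{A}}$ and $C_{n,\partial A}$ into $C_{n-1,\partial A}$; combined with the disjoint union $K_{n,A}=K_{n,\tilde{A}}\sqcup K_{n,\partial A}$, this exhibits $(C_A,\partial^C_A)$ as a genuine direct sum of chain complexes $(C_{\tilde{A}},\partial^C)\oplus(C_{\partial A},\partial^C)$.

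Next I would transport this splitting through the $\Hom$-functor. Since $\Hom(-,G_{n+1})$ takes a direct sum of generating sets to a direct sum of groups, one gets $\homA^{-1}=\homtA^{-1}\oplus\homdA^{-1}$ and, identically, $\homA^{0}=\homtA^{0}\oplus\homdA^{0}$. In the coboundary $(\delta_A^{-1}t)_n=t_{n-1}\circ\partial^C_{n,A}+\partial^G_{n+1}\circ t_n$ the first term pre-composes with $\partial^C_{n,A}$, which preserves the splitting by the previous step, while the second term post-composes with $\partial^G_{n+1}$, which only changes the coefficient index and hence cannot move a map off its support. Thus $\delta_A^{-1}$ is block-diagonal for the interior/boundary decomposition, and its two diagonal blocks are precisely the restrictions $\delta_{\tilde{A}}^{-1}=\delta_A^{-1}|_{\homtA^{-1}}$ and $\delta_{\partial A}^{-1}=\delta_A^{-1}|_{\homdA^{-1}}$, the first landing in $\homtA^{0}$ and the second in $\homdA^{0}$.

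Finally, for a direct sum of maps the kernel is the direct sum of the kernels: writing $t=u+v$ with $u\in\homtA^{-1}$ and $v\in\homdA^{-1}$, the images $\delta_A^{-1}(u)\in\homtA^{0}$ and $\delta_A^{-1}(v)\in\homdA^{0}$ sit in complementary summands, so $\delta_A^{-1}(t)=0$ forces $\delta_A^{-1}(u)=0$ and $\delta_A^{-1}(v)=0$ individually. Hence $\ker(\delta_A^{-1})=\ker(\delta_{\tilde{A}}^{-1})\oplus\ker(\delta_{\partial A}^{-1})$, and comparing orders of these finite abelian groups yields $|\ker(\delta_A^{-1})|=|\ker(\delta_{\tilde{A}}^{-1})|\,|\ker(\delta_{\partial A}^{-1})|$. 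The one place needing genuine care is the block-diagonality step: it is exactly where the geometric definitions of $\tilde{A}$ and $\partial A$ enter, and I expect the subtle point to be confirming that no simplex straddles the two regions, so that $(C_A,\partial^C_A)$ really does split; once that holds, everything else is formal bookkeeping with finite abelian groups.
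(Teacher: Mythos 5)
Your argument hinges on the claim that the interior/boundary partition splits $(C_A,\partial^C_A)$ into a direct sum of \emph{chain complexes}, so that $\delta_A^{-1}$ becomes block diagonal and its kernel factors. That is exactly the step that fails, and it fails for the reason you flagged and then assumed away: straddling simplices always exist. Whenever $A$ is connected with non-empty interior and boundary, there is an edge joining a vertex of $\tilde{A}$ to a vertex on $\partial A$, and your two verifications are carried out with incompatible definitions when applied to it. Face-closure of $K_{n,\partial A}$ holds only under the literal reading $K_{n,\partial A}=\{x\,:\,x\cap\tilde{A}=\emptyset\}$, but under that reading the straddling edge lies in \emph{neither} $K_{n,\tilde{A}}$ nor $K_{n,\partial A}$, so $C_{n,A}\neq C_{n,\tilde{A}}\oplus C_{n,\partial A}$ and the decomposition $\homA^{-1}=\homtA^{-1}\oplus\homdA^{-1}$ breaks down. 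If instead you force the partition by taking $K_{n,\partial A}=K_{n,A}\setminus K_{n,\tilde{A}}$, then the straddling edge is a boundary simplex with an interior face, so $K_{\partial A}$ is not closed under faces. Either way, block-diagonality is false, not merely unproven: in the $1$-gauge case take $u\in\homtA^{-1}$ supported on interior vertices and let $x$ be an edge from an interior vertex $v$ to a vertex $w$ on $\partial A$; then $(\delta_A^{-1}u)_1(x)=u_0(\partial^C_1 x)=\pm\,u_0(v)$, which need not vanish, so $\delta_A^{-1}\bigl(\homtA^{-1}\bigr)\not\subseteq\homtA^{0}$. The coboundary is only block \emph{triangular} with respect to the support decomposition, and the kernel of a block-triangular map does not in general factor as the product of the kernels of its diagonal blocks.

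This is precisely why the paper does not argue via a splitting. Its proof keeps only the weaker structure that survives: $\ker(\delta_{\tilde{A}}^{-1})$ is a subgroup of $\ker(\delta_A^{-1})$, and the restriction map $\phi([f])=f|_{\partial A}$ is shown to be a well-defined isomorphism $\ker(\delta_A^{-1})/\ker(\delta_{\tilde{A}}^{-1})\cong\ker(\delta_{\partial A}^{-1})$, after which the order formula follows from Lagrange's theorem. In other words, what is actually available is a short exact sequence $0\to\ker(\delta_{\tilde{A}}^{-1})\to\ker(\delta_A^{-1})\to\ker(\delta_{\partial A}^{-1})\to 0$, with the surjection given by restriction to the boundary simplices; one must then do the genuine work of checking that restriction sends kernel elements to kernel elements and that every boundary kernel element arises this way. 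A direct sum would of course imply such a sequence, but the converse fails, and since the direct sum is unavailable here, your proof cannot be patched; it has to be replaced by the quotient/restriction argument (or an equivalent counting of cosets).
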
  
\begin{proof}
	Consider $\text{ker}(\delta^{-1}_A)$, a subgroup of $\homA^{-1}$.
	From the definitions of $\homtA^{-1}$, $\delta_{\tilde{A}}^{-1}$, $\homdA^{-1}$ and $\delta_{\partial A}^{-1}$, it is clear that \[\text{ker}(\delta^{-1}_{\tilde{A}}) = \{f \in \homtA^{-1}| \delta_{\tilde{A}}^{-1}f = 0\}\] and \[\text{ker}(\delta_{\partial A}^{-1}) = \{f \in \homdA^{-1}|\delta_{\partial A}^{-1}f = 0\}\] are subgroups of $\text{ker}(\delta_{A}^{-1})$. We introduce the following equivalence relation on $\text{ker}(\delta^{-1}_A)$: let $f,f' \in \text{ker}(\delta^{-1}_A)$,
	\begin{equation}
	f \sim f' \Leftrightarrow f'-f = g \in \text{ker}(\delta^{-1}_{\tilde{A}}).
	\end{equation}
	That is, two collections of maps $f$ and $f'$ in $\text{ker}(\delta^{-1}_A)$ are equivalent if they differ by a collection of maps in $\text{ker}(\delta^{-1}_{\tilde{A}})$, i.e., maps with support in $K_{\tilde{A}} = \cup_{n = 0}^d K_{n,\tilde{A}}$ that are also sent to the trivial map by the co-boundary operator. This equivalence relation defines the quotient group $\text{ker}(\delta_{A}^{-1})/\text{ker}(\delta^{-1}_{\tilde{A}})$. Define the map $\phi: \text{ker}(\delta_{A}^{-1})/\text{ker}(\delta^{-1}_{\tilde{A}}) \to \text{ker}(\delta_{\partial A}^{-1})$, which for a class $[f] \in \text{ker}(\delta_{A}^{-1})/\text{ker}(\delta^{-1}_{\tilde{A}})$, $\phi([f]) = f|_{\partial A} \in \text{ker}(\delta^{-1}_{\partial A})$, where $f|_{\partial A} = \{(f|_{\partial A})_n\}$ is a collection of maps in $\homdA^{-1}$ such that
	\begin{eqnarray}
	(f|_{\partial A})_n(x) = \begin{cases}
	f_n(x), &\mbox{if}\quad x \in K_{n, \partial A}, \\
	0, &\mbox{\text{otherwise}}.
	\end{cases}  
	\end{eqnarray}
	So, $\phi$ is a map that sends any class $[f] \in \text{ker}(\delta_{A}^{-1})/\text{ker}(\delta^{-1}_{\tilde{A}})$ to the restriction $f|_{\partial A}$ to $\partial A$ of one of its representatives. This map does not depend on the choice of representative of a class. Indeed, let $[f] \in \text{ker}(\delta_{A}^{-1})/\text{ker}(\delta^{-1}_{\tilde{A}})$ and choose two representatives $f', f'' \in [f]$, $f' \neq f''$. We could have that $\phi([f]) = f'|_{\partial A}$ and $\phi[f] = f''|_{\partial A}$. But there is $g \in \text{ker}(\delta^{-1}_{\tilde{A}})$ such that $f'-f'' = g$, so $f'|_{\partial A} - f''|_{\partial A} = g|_{\partial A} = 0$, because $g \in \homtA^{-1}$. Therefore, $f'|_{\partial A} = f''|_{\partial A}$. Moreover, $\phi$ is an isomorphism. To see this, first take $[f] \in \text{ker}(\phi)$. So, $\phi([f]) = 0 \Leftrightarrow f|_{\partial A} = 0$, which means that $f$ is a collection of trivial maps. Therefore, $\text{ker}(\phi) = \{0\}$ and $\phi$ is injective. The map $\phi$ is also surjective, because if we take a map $g \in \text{ker}(\delta^{-1}_{\partial A})$, it is a collection of maps which are zero everywhere except in the boundary of $A$ and it can be obtained by applying $\phi$ in the class $[g] \in \text{ker}(\delta_{A}^{-1})/\text{ker}(\delta^{-1}_{\tilde{A}})$. Therefore, $\phi$ is bijective. Now, let $[f] , [g] \in \text{ker}(\delta_{A}^{-1})/\text{ker}(\delta^{-1}_{\tilde{A}})$. The sum $[f] + [g]$ is given by $[f] + [g] = [f+g]$. We have then $\phi([f] + [g]) = \phi([f+g]) = (f+g)|_{\partial A} = f|_{\partial A} + g|_{\partial A} = \phi([f]) + \phi([g])$, for any representatives $f \in [f]$ and $g \in [g]$. So, $\phi$ is also a homomorphism and therefore it is indeed an isomorphism. Thus, we have indeed that
	\begin{equation} 
	\text{ker}(\delta_{A}^{-1})/\text{ker}(\delta^{-1}_{\tilde{A}}) \cong \text{ker}(\delta^{-1}_{\partial A})
	\end{equation}
	which implies that
	\begin{equation}\label{eq:ker_dec}
	|\text{ker}(\delta_{A}^{-1})| = |\text{ker}(\delta^{-1}_{\tilde{A}})||\text{ker}(\delta^{-1}_{\partial A})|.
	\end{equation}
	\end{proof}
	Now, since $K_{n, \tilde{A}} \cap K_{n, \partial A} = \emptyset$ and $K_{n, \tilde{A}} \cup K_{n, \partial A} = K_{n,A}$ for any $n=0,...,d$, we can write the $n$-chain group $C_{n,A}$ as a direct sum of subgroups $C_{n,A} = C_{n,\tilde{A}} \oplus C_{n,\partial A}$, i.e., every $c = \sum_{x \in K_{n,A}}c(x)x \in C_{n,A}$ can be written as $c = \sum_{x \in K_{n,\tilde{A}}}c(x)x + \sum_{x \in K_{n,\partial A}}c(x)x$. This implies that any homomorphism $f_n:C_{n,A}\to G_{n+1}$, where $G_{n+1}$ is some arbitrary finite Abelian group, can be written as $f_{n}:C_{n,\tilde{A}} \oplus C_{n,\partial A} \to G_{n+1}$. Thus,
	\begin{equation}
		\text{Hom}(C_{n,A}, G_{n+1}) = \text{Hom}(C_{n,\tilde{A}} \oplus C_{n,\partial A}, G_{n+1}). \nonumber 
	\end{equation}
	There is a natural isomorphism \cite{lang, maclane}
	\begin{equation}
		\text{Hom}(C_{n,\tilde{A}} \oplus C_{n,\partial A}, G_{n+1}) \cong \text{Hom}(C_{n,\tilde{A}}, G_{n+1}) \oplus \text{Hom}(C_{n,\partial A}, G_{n+1}), \nonumber 
	\end{equation} 
	thus,
	\begin{equation}
		\text{Hom}(C_{n,A}, G_{n+1}) \cong \text{Hom}(C_{n,\tilde{A}}, G_{n+1}) \oplus \text{Hom}(C_{n,\partial A}, G_{n+1}),
	\end{equation}
	this implies that
	\begin{equation}\label{hom_iso}
		\homA^{-1} \cong \homtA^{-1} \oplus \homdA^{-1},
	\end{equation}
	and thus
	\begin{equation}\label{eq:hom_dec}
		|\homA^{-1}| = |\homtA^{-1}||\homdA^{-1}|.
	\end{equation}
	Then, from the first isomorphism theorem, we have that
	\begin{eqnarray}
	\homA^{-1}/\text{ker}(\delta_{A}^{-1}) \cong \text{Im}(\delta^{-1}_A), \\
	\homtA^{-1}/\text{ker}(\delta^{-1}_{\tilde{A}}) \cong \text{Im}(\delta^{-1}_{\tilde{A}}), \\
	\homdA^{-1}/\text{ker}(\delta^{-1}_{\partial A}) \cong \text{Im}(\delta^{-1}_{\partial A}),
	\end{eqnarray}
	so 
	\begin{eqnarray}
	|\text{Im}(\delta^{-1}_A)| = \frac{|\homA^{-1}|}{|\text{ker}(\delta_{A}^{-1})|}, \label{eq:imA} \\
	|\text{Im}(\delta^{-1}_{\tilde{A}})| = \frac{|\homtA^{-1}|}{|\text{ker}(\delta_{\tilde{A}}^{-1})|}, \label{eq:imtA} \\
	|\text{Im}(\delta^{-1}_{\partial A})| = \frac{|\homdA^{-1}|}{|\text{ker}(\delta_{\partial A}^{-1})|}. \label{eq:imbA}
	\end{eqnarray}
	Thus, dividing (\ref{eq:imA}) by (\ref{eq:imtA}), we have finally that
	\begin{eqnarray}
	\frac{|\text{Im}(\delta^{-1}_A)|}{|\text{Im}(\delta^{-1}_{\tilde{A}})|} = \frac{|\homA^{-1}|}{|\text{ker}(\delta_{A}^{-1})|}\frac{|\text{ker}(\delta_{\tilde{A}}^{-1})|}{|\homtA^{-1}|} \nonumber \\
	= \frac{|\homtA^{-1}||\homdA^{-1}||\text{ker}(\delta_{\tilde{A}}^{-1})|}{|\text{ker}(\delta^{-1}_{\tilde{A}})||\text{ker}(\delta^{-1}_{\partial A})||\homtA^{-1}|} \nonumber \\
	= \frac{|\homdA^{-1}|}{|\text{ker}(\delta_{\partial A}^{-1})|} = |\text{Im}(\delta^{-1}_{\partial A})|, \nonumber
	\end{eqnarray} 
	where we used equations (\ref{eq:hom_dec}), (\ref{eq:ker_dec}) and (\ref{eq:imbA}).

\acknowledgments

JPIJ thanks CNPq (Grant No. 162774/2015-0) for support during this work  and M. I. Jimenez for the valuable discussions.
MP work is supported by CAPES. LNQX thanks CNPq (Grant No.
164523/2018-9) for supporting this work.

\bibliographystyle{JHEP}



%
%

\bibliography{bib2}



\end{document}